\newtheorem{theorem}{Theorem}[section]
\newtheorem{lemma}{Lemma}[section]
\newtheorem{definition}{Definition}[section]
\def\tilde{\widetilde}
\def\b{\mathbf}
\DeclareMathOperator{\score}{score}
\begin{document}
\title{Nonlocal PageRank}\thanks{The work of F.D. has been supported by the GNCS--INdAM project ``Nonlocal models for the analysis of complex networks''}%
\author{Stefano Cipolla}\address{Department of Mathematics ``Tullio Levi-Civita'', University of Padua, 35121, Padova, Italy}
\author{Fabio Durastante}\address{Consiglio Nazionale delle Ricerche -- Istituto per le Applicazioni del Calcolo ``M. Picone'', Naples, Italy}
\author{Francesco Tudisco}\address{GSSI Gran Sasso Science Institute, 67100, L'Aquila, Italy}
\date{DATE}
\begin{abstract}
In this work we introduce and study a nonlocal version of the PageRank. In our approach, the random walker explores the graph using longer excursions than just moving between neighboring nodes. As a result, the corresponding ranking of the nodes, which takes into account a \textit{long-range interaction} between them, does not exhibit concentration phenomena typical of spectral rankings which take into account just local interactions. 
We show that the  predictive value of the rankings obtained using our proposals is considerably improved on different real world problems. 
\end{abstract}
\subjclass[2010]{05C82,68R10,94C15,60J20}
\keywords{Complex network, nonlocal dynamics, Markov chain, Perron--Frobenius}
\maketitle
\section{Introduction}
Identifying and quantifying important components in a dataset or a complex system modeled by a network, using only the topological structure of nodes and edges,  is a very important issue in exploratory data analysis.

Various specific tasks have been designed around this quite general problem, at a global scale---with the aim of providing insightful summary statistics such as clustering coefficients, robustness or total communicability \cite{benzi2013total,cohen2010complex,girvan2002community}---at an intermediate (or meso) scale---by identifying structures such as communities, anti-communities or core-periphery \cite{girvan2002community,fasino2017modularity,rombach2014core,tudisco2019nonlinear}---and at a local level---where we aim at quantifying various node or edge  properties such as triadic closure or edge communicability \cite{cohen2010complex,estrada2012structure}. 
Here we focus on the so called \textit{centrality} problem, where we aim at assigning an importance score to each node in the network in order to discover the  most relevant nodes. While this task aims at unveiling network features that take place at a very local scale (nodes), it is nowadays apparent that complex networks feature an intrinsic higher-order organization~\cite{benson2016higher} and that  centrality scores should exploit the structure of the network as a whole in order to unveil insightful network properties that are otherwise overlooked. To this end, in this work we propose a simple generalization of the renowned PageRank centrality which forces the global network structure into this local scale centrality model. 

PageRank  had its fortune due to its employment in early versions of the Google search engine \cite{page1999pagerank}. The main idea of this centrality model is a mutually reinforcing definition of importance: the importance of a node is influenced by the importances of the nodes it connects to. Equivalently,  PageRank
centrality can be be interpreted as the average amount of time that a random walker
spends on each node as the length of the walks tend to infinity. While this recursive definition clearly involves the global structure of connections in the network, at each time step the random walker moves from one node to another taking into account only the direct neighbors of that node. Thus, while this definition implies that each node score is influenced by all the other node importances, the classical PageRank centrality often results into a localized measure, as it typically happens for typical eigenvector--based centrality scores~\cite{PhysRevE.90.052808}. 

Different strategies have been considered in recent literature to overcome this issue. On the one end, higher-order adjacency tensors have been employed to model higher-order neighborhoods made by hyperedges containing three or more nodes. This approach is typically characterized  by the use of hypergraphs or simplicial complexes~\cite{benson2019three,arrigo2019framework,cipolla2019shifted,MR3771541}. On the other hand, non Markovian stochastic processes with memory have been used to model random walks that take into account longer paths of connections \cite{benson2017spacey,fasino2019higher,arrigo2017non,cipolla2019extrapolation}.

Following this second line of research, in this work we  propose a nonlocal version of the classical PageRank model based on the usage of the L{\'e}vy random walk, i.e. the usage of an \emph{anomalous} nonlocal diffusion that employs  a one-parameter family of decaying transition probabilities. In the case of undirected networks, this type of random walk was considered  for example in~\cite{riascos2012long}.  The main idea of our approach is to move from the original exploration strategy of the network exploited in the PageRank, where the random walker moves between neighboring nodes with uniform probability, to a strategy that permits longer excursions between the nodes of the network, i.e., it allows us to move from a node $i$ to any other node that is connected to $i$ through a path of any length.  %
These types of longer length interactions enhance the navigability of the network and thus allow for a faster exploration,    as observed in other related contexts, including fractal small-worlds networks~\cite{PhysRevE.74.017101}, lattices~\cite{PhysRevLett.104.018701} and general multi-hopper models on digraphs \cite{Weng2015,estrada2017random}. 
In particular, very related to our approach is the concept of path Laplacians introduced by Estrada et al.  in \cite{estrada2012path} and further analyzed in \cite{estrada2017path,estrada2018path}.%

The reminder of the paper is structured as follows:
We start  by fixing the notation and by recalling the standard PageRank model in Section~\ref{sec:thepagerankalgorithm}. Then, in Section~\ref{sec:nonlocalpagerankalgorithm} we introduce the proposed nonlocal PageRank model, which is based on the choice of a distance function between the nodes and a one-parameter family of  L{\'e}vy--type random walks on the graph. In Section~\ref{sec:shortest_path_PageRank} we perform an asymptotic analysis on the selection of the parameter that defines the transition probabilities and we show that the classical PageRank follows as a special case of the new model for large values of the parameter and  when the chosen distance is the standard shortest-path distance. However, different values of the parameter allow us to obtain models that are more stable and less localized, as discussed in Section \ref{sec:stability_non_locality}. These properties help improving a number of network mining tasks: In Section \ref{sec:link_prediction} we show how different values of the parameter affect the behavior of the model in the context of link prediction. Whereas in Section \ref{sec:choosing_the_distance} we discuss how different choices of the distance function can affect the model. In particular, choosing suitable---and possibly problem related---distance functions gives us an additional flexibility  which allows us to improve the quality of the resulting centrality assignment. We highlight this by considering the London underground train test problem, where we design a ``metro distance'' that takes into account the multilayer structure of the network (given by the several underground train lines) and that compares favorably with other PageRank--like centralities. 

\subsubsection*{Data and software}
All data and software used in this work are available online. For convenience, we list below all the network data used in the different sections of the paper. We detail additional information on the various datasets when appropriate in the text.
\begin{itemize}[noitemsep]
	\item {\tt USAir97}, this is a directed network of air traffic in the U.S.A.\, available from {\tt Pajek} repository~\cite{nr}.
	\item {{\tt EUair}, this is a thirty-seven layer network, each one corresponding to a different airline operating in Europe~\cite{Cardillo2013}. We use the aggregate graph, consisting of the union of all the layers.}
	\item {\tt Barcelona}, this is the directed transportation network for the city of Barcelona (Spain) from the Research Core Team collection~\cite{transportationnetwork}.
	\item \texttt{adjnoun}, this undirected network represents common occurrences of adjectives and nouns in the ``David Copperfield'' novel by C. Dickens~\cite{MR2282139}.
	\item \texttt{zachary}, this is a (small) social network of a university karate club~\cite{MR2282139}.
	\item {\tt gre\_115}, this directed network is available from the Harwell-Boeing  collection~\cite{MR3363405}.
	\item {{\tt cage9}, this is a network obtained from a DNA electrophoresis model~\cite{VANHEUKELUM2002313} in which a polymer is modeled as a chain of ``monomers'' connected by bonds.}
	\item {{\tt delaunay\_n10}/{\tt delaunay\_n12}, are networks obtained from the adjacency matrices relative to two Delaunay triangulations of random points in the unit square~\cite{5470485}. }
	\item {{\tt 3elt}, is an example graph from the AG-Monien Graph Collection by Ralf Diekmann and Robert Preis (see~\cite{MR2865011}).} 
	\item {\tt tube} is the undirected multilayer network of London's underground trains connections.
\end{itemize}
The {\tt tube} network was created by us  starting from the dataset developed in \cite{de2014navigability} and it is made of 13 layers: one layer for Docklands Light Railway (DLR) trains, one layer for overground trains and eleven underground train layers, one for each line. The dataset includes also geographical coordinates of the nodes and passengers usage statistics across several years (2008--2017), obtained from ORR London Datastore \cite{london_usage}. Both this dataset and the software we developed for the experiments shown in the paper are available at 	\url{https://github.com/Cirdans-Home/NonLocalPageRank}

\section{The PageRank algorithm}
\label{sec:thepagerankalgorithm}

A \emph{digraph}, or directed graph,  $\Gamma = (V,E)$ is defined by a set of $n$ nodes $V=\{v_1,\ldots,v_n\} \equiv \{1,\ldots,n\}$, and a set of ordered edges $E = \{(i,j)\ :\ i,j \in V \} \subseteq V \times V$ representing the connections between the nodes. A \emph{walk} of length $k$ in $\Gamma$ is a list of nodes $i_1,\ldots,i_k,i_{k+1}$ such that $(i_j,i_{j+1}) \in E$, $\forall j=1,\ldots,k$. If the first and the last edge coincides then the walk is called a closed walk. If no repeated nodes appear in the sequence then the walk is called a \emph{path}, while a path in which only the first and last node coincide is called a \emph{cycle}. We consider here only \emph{loop-less} graphs, i.e.,  edges of the form $(i,i) \in E$ are not allowed. We say that a digraph is \emph{strongly connected} if there exists a path between every pair of nodes.

Every graph $\Gamma$ can be represented as a binary adjacency matrix $A = (a_{i,j})$ with
\begin{equation*}
a_{i,j} = \left\lbrace\begin{array}{ll}
1 & (i,j) \in E,  \\
0 & \text{ otherwise}. 
\end{array}{}\right.
\end{equation*}{}
As we allow directed graphs, such matrix will not be symmetric, in general. 

Let $\boldsymbol{1}$ be the vector of all ones of size $n$, and let $A$ be the adjacency matrix of the digraph $\Gamma = (V,E)$ with $|V| = n$.  We  define the diagonal matrix $D$ of the out-degrees of $\Gamma$ as
\begin{equation}\label{eq:diagonalmatrix_of_out_degrees}
D_{\text{out}} = \operatorname{diag}(A \boldsymbol{1}) = \operatorname{diag}(d_1^{\text{out}},\ldots,d_n^{\text{out}}),
\end{equation}
in which each diagonal entry $d_i^{\text{out}}$ represents the number of outgoing edges from the node $i$.

We can now build a random walk on $\Gamma$ by considering the following transition matrix $P = (p_{i,j})$
\begin{equation}\label{eq:local_transition_matrix}
p_{i,j} = \left\lbrace \begin{array}{ll}
\nicefrac{1}{d_i^{\text{out}}}  & (i,j) \in E,  \\
0  & \text{ otherwise}. 
\end{array}{} \right.
\end{equation}
Note that $P$ is tightly related to the adjacency matrix and the diagonal matrix of the out degrees. In fact, a compact form for \eqref{eq:local_transition_matrix} reads $P = D_{\text{out}}^{-1}A$, where the matrix $D_{\text{out}}^{-1}$ is defined by setting the inverse of zero diagonal entries to zero by convention.  

A random walker that obeys the transition matrix $P$ has  equal chance of moving from a node to any of its out--neighbors. Note that by following this transition we could end in a \textit{cul--de--sac}, represented by a node $i$ with $d_i^{\text{out}} = 0$. To avoid this circumstance we modify $P$ in~\eqref{eq:local_transition_matrix} to permit the walker to teleport to any other location in the graph with some probability, i.e., we define the \emph{PageRank} transition matrix
\begin{equation}\label{eq:page_rank_matrix}
G = c \tilde{P} + \frac{1-c}{n} \boldsymbol{1}\boldsymbol{1}^T, \quad c \in (0,1],
\end{equation}
where $\tilde{P}$ is the matrix $P$ in which each zero row has been replaced by the uniform vector $\nicefrac{\boldsymbol{1}^T}{n}$. The matrix $G$ is a positive, row stochastic matrix. Thus, by the Perron-Frobenius Theorem, it admits  a unique positive and dominant left eigenvector $\mathbf{s}$ corresponding to the eigenvalue $1$ (see e.g.\ \cite{MR1298430}).  %
In other words, the Markov chain with transition matrix $G$ has a unique and positive stationary distribution \begin{equation*}
\mathbf{s}^T  = \mathbf{s}^TG,  \quad \b s^T \b 1 =1, \quad s_i>0, \forall\ i=1,\ldots,n.
\end{equation*}
Such $\mathbf{s}$ is called the \emph{PageRank} vector of $\Gamma$, and its $i$th entry provides a measure of the importance of node $i$ in the digraph $\Gamma$. Figure~\ref{fig:classicalpagerank} illustrates an example PageRank vector on the Zachary's karate club network.
\begin{figure}[t]
	\centering
	\includegraphics[width=0.85\columnwidth]{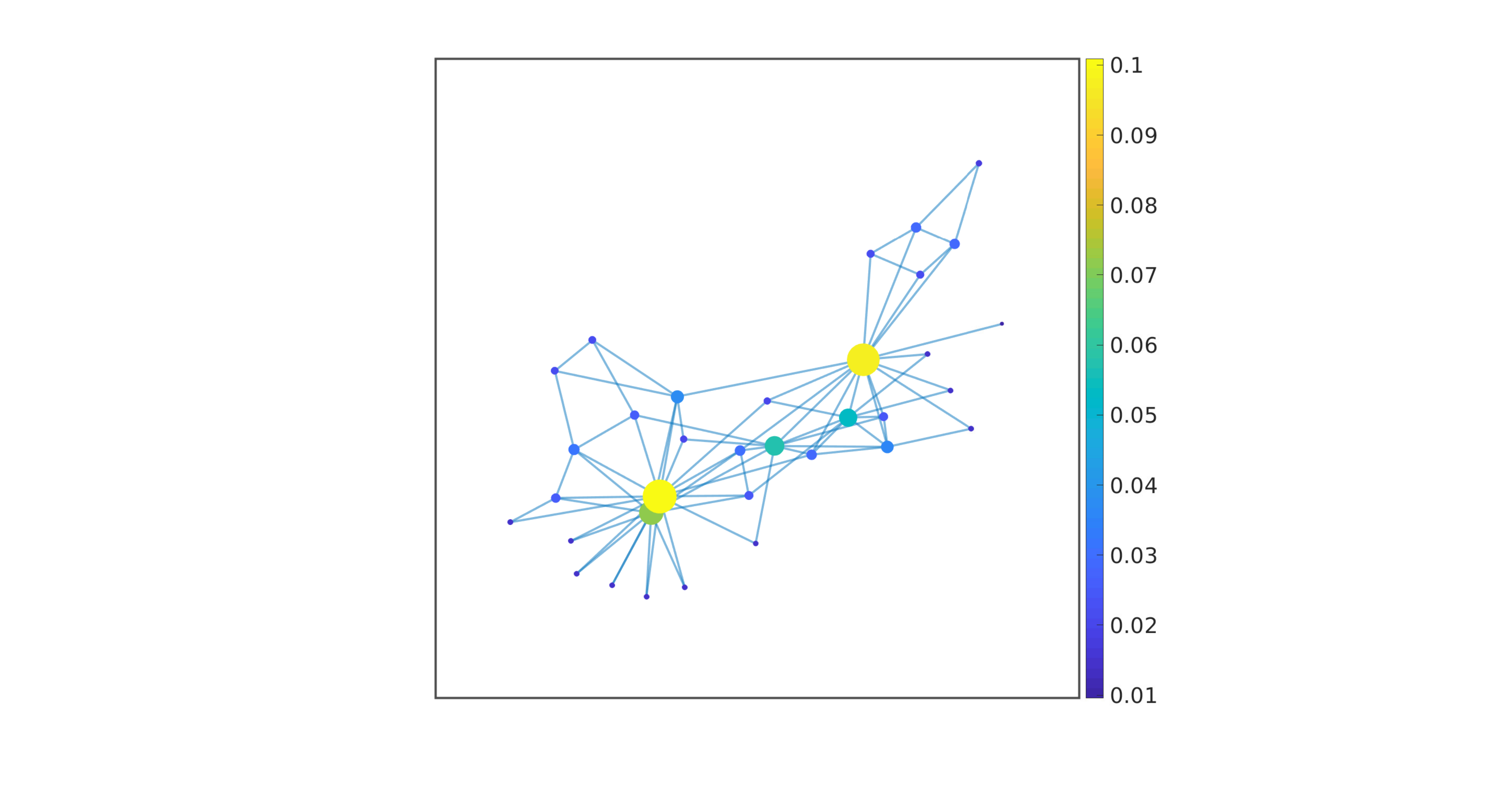}
	\caption{Pictorial representation of the PageRank vector $\b s$ for the Zachary's karate club digraph~\cite{girvan2002community}, computed with $c = 0.85$.}
	\label{fig:classicalpagerank}
\end{figure}

From the modeling point of view, the teleportation factor $c$ included in~\eqref{eq:page_rank_matrix} can be interpreted in terms of a ``surfer'' navigating the digraph $\Gamma$ of the web hyperlinks. When moving from a page to another, the surfer may decide to follow one of the hyperlinks that are listed in the web page, choosing  among them with uniform probability. The surfer makes this choice with probability $c$. Whilst, with probability $1-c$, they may decide to ``teleport'' onto a new  web page, in principle not connected with the current one, choosing again uniformly at random. This is an example of a nonlocal behavior in which it is possible to end up being in nodes that are far away from the original starting node. However, we have only partial control on this longer jumps. In fact, we are only allowed to either tune the parameter $c$ or to  introduce a ``personalized version of the PageRank'' where the surfer chooses to teleport onto a new page $j$ with a probability $v_j$ that depends on the destination page $j$, rather than choosing among all the possible web pages uniformly at random. From a mathematical viewpoint, this second choice means we replace the PageRank matrix \eqref{eq:page_rank_matrix} with $c\tilde P + (1-c)\b 1 \b v^T$, being $\b v = (v_1, \dots, v_n)$.

In the next Section~\ref{sec:nonlocalpagerankalgorithm} we introduce a modification of this model that allows us to include one additional level of nonlocality to the original PageRank centrality.

\section{The nonlocal PageRank model}
\label{sec:nonlocalpagerankalgorithm}

Let us suppose that we are  at a node $i$ in the digraph $\Gamma$. Following from the discussion of Section~\ref{sec:thepagerankalgorithm} we know that we have now two possibilities: going to a node that is connected to the present one, or to teleport away in the graph, possibly giving preference to certain nodes over others. However, this preference does not depend on the current node $i$ and, thus, does not take into account the fact that one is typically more inclined to teleport to some page that is somewhat related to $i$. 
To overcome this issue, we define here a process where  the probability to move to a node $j$,  that lies on a walk that contains $i$, is large when $j$ is  close  to $i$ and  decreases the further away we move from $i$. 

To this end, we let $\delta:V\times V\to \mathbb R_+$ be a distance function on $\Gamma$ and define the transition probability matrix $P_\alpha = (p_{i,j})$ as 
\begin{equation}\label{eq:levy_transition}
p_{i,j} = \begin{cases}
\frac{f_\alpha\big(\delta(i,j)\big)}{\sum_{k \neq i} f_\alpha\big(\delta (i,k)\big)} & \delta(i,j) < \infty\\[-.9em] & \\
0 & \text{ otherwise} 
\end{cases}\, ,
\end{equation}
where $f_{\alpha}(x): \mathbb{R}_+ \rightarrow \mathbb{R}_+$ is a family of nonnegative and nonincreasing functions parametrized by $\alpha \in \mathbb{R}_+$. The family $f_\alpha$  will be used to regulate the trade-off between the importance of the nodes at a short--range and the ones that are further away. 
Note that the distance $\delta$ does not need to define a metric on $\Gamma$ and, for example, does not need to be symmetric, i.e.{,} we allow $\delta(i,j)\neq \delta(j,i)$. In particular, observe that this is the situation if we choose $\delta$ to be the shortest path distance, as we will discuss in more details in the next section. Also note that, if $\Delta$ denotes the distance matrix such that $\Delta_{ij}=\delta(i,j)$, then it holds
\begin{equation*}
P_\alpha = \text{diag}(f_\alpha(\Delta)\b 1)^{-1}f_\alpha(\Delta)
\end{equation*}
where $f_\alpha$ is applied entrywise and the inverse of $+\infty$ is set to zero by convention.

As  for the standard PageRank case, a random walker obeying the transition rule \eqref{eq:levy_transition} could  end up in a node $i$ such that $\delta(i,j)=+\infty$ for all other $j\in V$ (i.e.\ the $i$-th row of $P_\alpha$ is all zero) and get stuck there. Thus we modify the state transitions to include a teleportation, i.e., we let
\begin{equation}\label{eq:nonlocalpagerankmatrix}
G_\alpha = c \tilde{P}_\alpha + \frac{1-c}{n}\boldsymbol{1}\boldsymbol{1}^T, \quad c \in (0,1], 
\end{equation}
where $\tilde{P}_\alpha$ is the matrix $P_\alpha$ in which each zero row has been replaced by $\nicefrac{\boldsymbol{1}^T}{n}$. The matrix $G_\alpha$ is again a positive row stochastic matrix and, by the Perron-Frobenius Theorem, there exists a unique $\b s_\alpha$ with positive entries, such that $\mathbf{s}_{\alpha}^T = \mathbf{s}_{\alpha}^TG_\alpha$, $\b s_\alpha^T \b 1 =1$. We call $\b s_\alpha$ the \textit{nonlocal PageRank} vector.

Clearly, the nonloncal PageRank depends on the choice of the distance $\delta$ and on the choice of the family of decaying functions $f_\alpha$. In the next section we show that, when $\delta$ is the shortest path distance and $f_\alpha$ is suitably defined, the nonlocal PageRank interpolates the standard PageRank and the parameter $\alpha$ can be used to tune the ``amount of nonlocality'' we want to take into account. 

\section{Shortest path distance and convergence to the PageRank}
\label{sec:shortest_path_PageRank}

As we will see in Section~\ref{sec:choosing_the_distance}, different choices of the distance $\delta$ can be used, depending on the application. In fact, this flexibility is one of the key advantages of our proposed framework, which allows us to model nonstandard node-node interactions and thus to capture node properties that are overlooked otherwise. On the other hand, the arguably most  natural choice for distance $\delta$ is the \emph{shortest path} distance, whose definition we recall below
\begin{definition}\label{def:shortestpath_distance}
	Given a digraph $\Gamma = (V,E)$  the \emph{shortest-path} distance $\delta_\Gamma(i,j)$ between any two nodes $i,j\in V$ is the smallest length of any  path from $i$ to $j$. If there exists no such a path, we let $\delta_\Gamma(i,j) = +\infty$.
\end{definition}

This choice of the distance function allows us to retrieve the classical PageRank as a special case of our nonlocal PageRank model, provided the family $f_\alpha$ satisfies the following nonrestrictive decaying condition. 
\begin{definition}
	We say that $f_\alpha$ is a smoothing family of functions for the distance $\delta$ if $f_\alpha(x)\to 0$ and  $f_\alpha(x)/f_\alpha(1)\to 0$ as $\alpha\to\infty$, for all $x$ in the set
	$$
	\Omega_\delta = \{x: x\geq \delta(i,j), \text{ for all } (i,j)\notin E\} \subseteq \mathbb R_+\, .
	$$
\end{definition}
When $\delta =\delta_\Gamma$ we have $\Omega_{\delta_\Gamma}=[2,\infty]$, and we can easily produce examples of smoothing families by considering any nonincreasing function $f$ such that $f(2) <1$ and then defining $f_\alpha(x) = f(x)^\alpha$. For example, we can choose, as in~\cite{riascos2012long}, the functions
\begin{equation*}
f_\alpha(x) = \frac{1}{x^{\alpha}}\, .
\end{equation*}
In this case, the resulting transition matrix  $P_\alpha$ describes a L{\'e}vy random walk on $\Gamma$. Another example is the exponential function
\begin{equation*}
f_\alpha(x) = e^{-\alpha\, x}\, .
\end{equation*}

For any smoothing family of functions we have
\begin{lemma}\label{lem:smoothing_family}
	Let $f_\alpha$ be a smoothing family of nondecreasing functions for the distance $\delta = \delta_\Gamma$. Then $P_\alpha \to P$ entrywise as $\alpha\to \infty$, where $P$ is the PageRank transition matrix \eqref{eq:page_rank_matrix}.
\end{lemma}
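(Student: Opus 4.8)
The plan is to establish the entrywise limit one row at a time, splitting the entries of each row of $P_\alpha$ according to the type of the target node. The key structural observation I would use is that, for the shortest-path distance, every out-neighbour of a node $i$ lies at distance \emph{exactly} $1$ from $i$, so in~\eqref{eq:levy_transition} all out-neighbours of $i$ receive the common weight $f_\alpha(1)$; this is precisely what will force the limiting row to be uniform over the out-neighbourhood of $i$, matching the transition matrix $P = D_{\text{out}}^{-1}A$ of~\eqref{eq:local_transition_matrix}.

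First I would fix $i\in V$ and dispose of the dangling case: if $d_i^{\text{out}}=0$ there is no walk leaving $i$, so $\delta_\Gamma(i,k)=+\infty$ for every $k\neq i$, and the $i$-th rows of both $P_\alpha$ and $P$ are identically zero. So assume $d_i^{\text{out}}\geq 1$. Here I would record two elementary facts: $f_\alpha(1)>0$ (it is nonnegative and occurs as a denominator in the definition of a smoothing family), and every $k\neq i$ with $(i,k)\notin E$ and $\delta_\Gamma(i,k)<\infty$ satisfies $\delta_\Gamma(i,k)\geq 2$, hence $\delta_\Gamma(i,k)\in\Omega_{\delta_\Gamma}=[2,\infty]$. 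Splitting the normalisation in~\eqref{eq:levy_transition} into the $d_i^{\text{out}}$ out-neighbours (each contributing $f_\alpha(1)$) and the remaining reachable nodes (nodes with $\delta_\Gamma(i,k)=+\infty$ contributing $0$) yields
\begin{equation*}
\sum_{k\neq i} f_\alpha\big(\delta_\Gamma(i,k)\big) \;=\; d_i^{\text{out}}\, f_\alpha(1) \;+\!\!\! \sum_{\substack{k:\ (i,k)\notin E\\ 2\,\le\, \delta_\Gamma(i,k)<\infty}}\!\!\! f_\alpha\big(\delta_\Gamma(i,k)\big).
\end{equation*}
Dividing by $f_\alpha(1)$ turns the last sum into a finite sum of terms of the form $f_\alpha(x)/f_\alpha(1)$ with $x\in\Omega_{\delta_\Gamma}$, each tending to $0$ as $\alpha\to\infty$ by the smoothing hypothesis; hence $\big(\sum_{k\neq i} f_\alpha(\delta_\Gamma(i,k))\big)/f_\alpha(1)\to d_i^{\text{out}}$.

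Then I would pass to the limit in each entry $p^{(\alpha)}_{i,j}$ by dividing numerator and denominator of~\eqref{eq:levy_transition} by $f_\alpha(1)$: if $(i,j)\in E$ the numerator equals $f_\alpha(1)$, so $p^{(\alpha)}_{i,j}\to 1/d_i^{\text{out}} = p_{i,j}$; if $(i,j)\notin E$ with $\delta_\Gamma(i,j)<\infty$ the numerator is $f_\alpha(\delta_\Gamma(i,j))$ with $\delta_\Gamma(i,j)\in\Omega_{\delta_\Gamma}$, so $f_\alpha(\delta_\Gamma(i,j))/f_\alpha(1)\to 0$ while the denominator tends to $d_i^{\text{out}}\geq 1$, giving $p^{(\alpha)}_{i,j}\to 0 = p_{i,j}$; and if $\delta_\Gamma(i,j)=+\infty$ then $p^{(\alpha)}_{i,j}=0=p_{i,j}$ for every $\alpha$. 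This would exhaust all cases.

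I do not expect a genuine obstacle here: the statement is essentially an exercise in manipulating the normalisation in~\eqref{eq:levy_transition}, using only nonnegativity of $f_\alpha$ and the smoothing property. The only points that need a little care are the dangling-node rows and the remark that out-neighbours sit at distance exactly $1$, both handled above. As a byproduct, the same reasoning gives $\tilde P_\alpha\to\tilde P$ entrywise (the rows replaced by $\mathbf 1^T/n$ are exactly the zero rows, which coincide for $P_\alpha$ and for its limit $P$), and therefore $G_\alpha\to G$ entrywise as well.
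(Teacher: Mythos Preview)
Your proposal is correct and follows essentially the same approach as the paper: both split the normalisation $\sum_{k\neq i} f_\alpha(\delta_\Gamma(i,k))$ into the $d_i^{\text{out}} f_\alpha(1)$ contribution from neighbours and the remainder over nodes at distance $\geq 2$, and then use the smoothing property to kill the latter. The paper packages this as $P_\alpha = P + Y_\alpha$ with an explicit $Y_\alpha\to 0$, whereas you compute each entry's limit directly after dividing through by $f_\alpha(1)$; these are equivalent presentations of the same computation, and your version is in fact slightly more careful in explicitly invoking the ratio condition $f_\alpha(x)/f_\alpha(1)\to 0$ and in treating the dangling-node rows separately.
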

\begin{proof}
	Note that $\sum_k f_\alpha(\delta_\Gamma(i,k)) = d^{\text{out}}_i f_\alpha(1) + \sum_{k:\delta_\Gamma(i,k)\geq 2} f_\alpha(\delta_\Gamma(i,k))$. Thus, simple algebraic manipulations show that $P_\alpha = P + Y_\alpha$, with
	\begin{equation*}
	(Y_\alpha)_{ij} = \begin{cases}
	\displaystyle -\frac{1}{d^{\text{out}}_i} \frac{\sum_{k:\delta_\Gamma(i,k)\geq 2} f_\alpha(\delta_\Gamma(i,k))}{d^{\text{out}}_i f_\alpha(1) + \sum_{k:\delta_\Gamma(i,k)\geq 2} f_\alpha(\delta_\Gamma(i,k))}& (i,j)\in E\\
	\displaystyle \frac{f_\alpha(\delta_\Gamma(i,j))}{\sum_{k \neq i} f_\alpha(\delta_\Gamma(i,k))} & 2\leq \delta_\Gamma(i,j)<\infty \\
	0 & \delta_\Gamma(i,j) = \infty
	\end{cases} \, .
	\end{equation*}
	As $f_\alpha(x)\to 0$ for $\alpha \to \infty$ and $x\geq 2$, we have $(Y_\alpha)_{ij}\to 0$, and the proof is complete.
\end{proof}
\begin{figure}[t]
	\centering
	\includegraphics[width=\columnwidth]{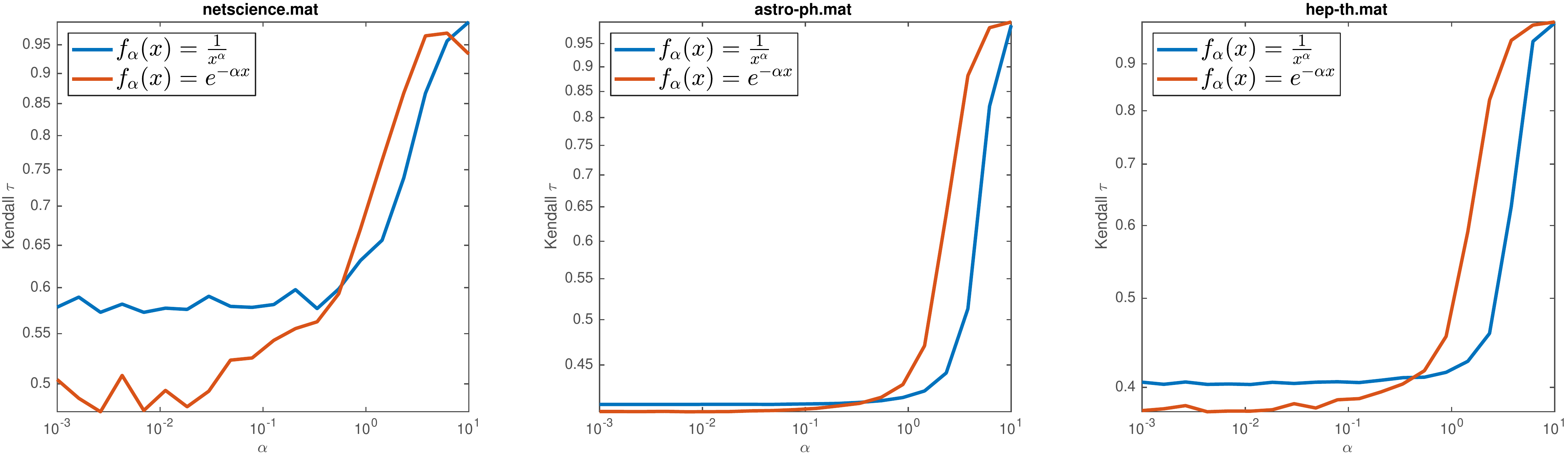}
	\caption{ {The figure shows Kendall's $\tau$ correlation coefficient of PageRank vs Nonlocal PageRank for different  $f_\alpha(x)$ and different values of $\alpha$ for the shortest path distance. Problems sizes (left to right) $n=1589$, $n=16706$ and $n=8361$.}}
	\label{fig:shortestpath}
\end{figure}

Note that, when $f_\alpha(x)=f(x)^\alpha$ is defined via a power-law (and $f(x)$ is bounded) we obviously have $f_\alpha(x)\to 1$ as $\alpha\to 0$, implying that $P_\alpha$ converges to the uniform transition matrix where $i$ transitions to every node $j$ at finite distance from $i$ with equal probability. This observation, combined with the Lemma above, shows that choosing values of $\alpha$ reasonably far from $0$ and $\infty$ allows us to define a model that interpolates between a standard Markov chain on the graph and a purely uniformly random model.  
This is also shown by Figure~\ref{fig:shortestpath}, where we 
{plot the Kendall's $\tau$ correlation coefficient  between the nonlocal PageRank centrality score---obtained with different values of $\alpha$---and the standard PageRank centrality on the datasets {\tt netscience, astro-ph} and {\tt hep-th}.  In particular, the figure highlights the result in Lemma~\ref{lem:smoothing_family} showing that as $\alpha \rightarrow +\infty$ the correlation between the nonlocal PageRank and the PageRank converges to $1$, thus showing the convergence to the standard PageRank algorithm.}
Clearly, while the two extreme choices $\alpha\to 0$ and $\alpha\to\infty$ do not require to compute $\delta_\Gamma$, one of the drawbacks of using the nonlocal PageRank transition matrix $P_\alpha$ for other choices of $\alpha$ is the computation of the shortest--path distance matrix from Definition~\ref{def:shortestpath_distance}. The algorithm of choice for computing all the distances between the nodes is either the Floyd--Warshall algorithm~\cite{Floyd:1962:A9S:367766.368168,Warshall:1962:TBM:321105.321107} or the Johnson's algorithm~\cite{Johnson:1977:EAS:321992.321993}. The first one works on weighted digraphs with positive or negative edge weights, and no negative cycles, and it has a running time of $O(n^3)$. The second one allows only for few negative edge weights and has a running time of $O(n^2\log(n)+nm)$ where $m = |E|$. As it is clear from the running times estimates, the Floyd-Warshall algorithm is best suited for dense networks ($m \approx n$), while the Johnson's algorithm should be preferred in the case of sparse networks ($m \ll n$). The storage of $O(n^2)$ machine numbers has to be expected.

In the next section we show how different values of $\alpha$ can improve the stability and reduce the localization phenomenon of the standard  PageRank.

\subsection{Stability and nonlocality} \label{sec:stability_non_locality}

The classical PageRank centrality measure shares the same principal flaw of all the eigenvector centralities. The leading eigenvector
of the associated transition matrix can suffer the localization phenomenon~\cite{MR2302549,PhysRevE.90.052808,PhysRevE.99.012315}, i.e., most of the measure weight tends to concentrate around few most important nodes while giving to all the other nodes a small and numerically identical value, and thus ranking. This phenomenon has also the secondary effect of producing high variations in the value of the entries of the PageRank vector when the network topology faces a small perturbation, for example few edges are added or removed from the graph~\cite{MR2002174,MR3706916,pozza2017stability}. 

In a linear algebra terminology, this type of graph modification is equivalent to a structured perturbation of the transition matrix and is related to the concept of \textit{condition number of a Markov chain}~\cite{neumann-and-xu}. 

For a given $\alpha$, a given distance $\delta$ and a given smoothing function $f_\alpha$, we define a condition number of the nonlocal PageRank as a coefficient $\kappa_\alpha$ that quantifies the relative change in the nonlocal PageRank vector when changes occur in the graph. More precisely, suppose that $\overline \Gamma = (V,\overline E)$ is an edge-perturbation of $\Gamma=(V,E)$ and let $\overline G_\alpha$ be the nonlocal transition matrix \eqref{eq:nonlocalpagerankmatrix} of  $\overline \Gamma$. Then $\kappa_\alpha$ is a condition number of the nonlocal PageRank if
\begin{equation}\label{eq:def_general_cond_number}
    \frac{\|\mathbf{s}_\alpha-\overline{\mathbf{s}}_\alpha\|}{\|\mathbf{s}_\alpha\|} \leq \kappa_\alpha \, \frac{\|G_\alpha-\overline G_\alpha\|}{\|G_\alpha\|}
\end{equation}
and this bound holds for all possible perturbations of $\Gamma$. In~\eqref{eq:def_general_cond_number} $\mathbf{s}_\alpha$ and $\overline{\mathbf{s}}_\alpha$ are the nonlocal PageRank of $G_\alpha$ and $\overline G_\alpha$ respectively.  Note that property~\eqref{eq:def_general_cond_number} does not define a unique condition number $\kappa_\alpha$ and in fact several comparisons between different possible choices of $\kappa_\alpha$ have been studied~\cite{ChoandMeyer,Kirkland,neumann-and-xu}. 
Here we use a result of Seneta~\cite{MR932541} to characterize the stability of the nonlocal PageRank in terms of its norm-1 condition number. 

To this end, we use the \emph{ergodicity coefficient} of the  matrix sequence $\{G_\alpha\}_\alpha$. For a fixed $\alpha$, this coefficient is defined as  %
\begin{equation}\label{eq:ergodicity_coefficient}
\tau_1({G}_\alpha) = \sup_{\substack{\|\boldsymbol{\delta}\|_1 = 1 \\ \boldsymbol{\delta}^T \boldsymbol{1} = 0}} \|\boldsymbol{\delta}^T {G}_\alpha \|_1 = \frac{1}{2}\max_{j} \| G_\alpha^T (I - \mathbf{e}_j \boldsymbol{1}^T)\|_1 = \frac{1}{2} \max_{i,j}\sum_{k}\left| (G_\alpha)_{ik}-(G_\alpha)_{jk}\right|
\end{equation}
and the following quantity
$$
\operatorname{cond}_1(\mathbf{s}_\alpha) = \frac{1}{1-\tau_1(G_\alpha)}
$$
is a condition number for the nonlocal PageRank, in the sense that inequality~\eqref{eq:def_general_cond_number} holds for $\kappa_\alpha=\operatorname{cond}_1(\mathbf{s}_\alpha)$ and $\|\cdot\|=\|\cdot\|_1$, see e.g.\ \cite{MR932541}. Therefore, if we define the remainder $R =  G_\alpha - \overline{G}_\alpha$ and we express $G_\alpha$ as the structured perturbation $G_\alpha = \overline{G}_\alpha + R$, with $R$ such that $R \boldsymbol{1} = 0$, we can recover the following perturbation bound on the associated PageRank vectors
\begin{equation}\label{eq:convergence_bound}
\| \mathbf{s}_\alpha - \overline{\mathbf{s}}_\alpha \|_1 \leq  \|R\|_1 (1 - \tau_1(G_\alpha))^{-1}  \, .
\end{equation} 
In particular, as $\|\mathbf{s}_\alpha^T\|_1 = \|G_\alpha\|_1 = 1$ and  $|\lambda(G_\alpha)|\leq \tau_1(G_\alpha)$, for any eigenvalue $\lambda(G_\alpha)$ of $G_\alpha$ such that $\lambda(G_\alpha)\neq 1$ (see e.g.\ \cite{tudisco2015note}), the relation~\eqref{eq:convergence_bound} gives us a bound on the condition number of $\mathbf{s}_\alpha$ for relative changes in $G_\alpha$ in the norm sense, i.e.,
\begin{equation}
\operatorname{cond}_1(\mathbf{s}_\alpha) \leq \frac{1}{1 - |\lambda(G_\alpha)|}, \quad \forall \, \lambda(G_\alpha) \neq 1.
\end{equation}

It is then interesting to evaluate the behavior of such quantity with respect to the parameter $\alpha$, and indeed it is possible to prove the following result.
\def\hat{\widehat}
\begin{theorem}\label{pro:conditioning} Let $f_\alpha$ be a family of decreasing smoothing functions for the distance $\delta=\delta_\Gamma$.
Then, there exists $\alpha_0$ such that $\tau_1(G_{\alpha}) \leq \tau_1(G)$ for all $\alpha \geq \alpha_0$. %
\end{theorem}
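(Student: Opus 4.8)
The plan is to strip off the teleportation term, reduce the inequality to a statement about the smallest overlap between two rows of $\tilde P_\alpha$, feed in the entrywise convergence of Lemma~\ref{lem:smoothing_family}, and then split according to whether $\tau_1(G)$ already attains its largest possible value $c$. First, since the rank‑one term $\tfrac{1-c}{n}\mathbf 1\mathbf 1^T$ is constant along each column it cancels in every row difference, $(G_\alpha)_{ik}-(G_\alpha)_{jk}=c\,[(\tilde P_\alpha)_{ik}-(\tilde P_\alpha)_{jk}]$, so the last expression in \eqref{eq:ergodicity_coefficient} gives $\tau_1(G_\alpha)=c\,\tau_1(\tilde P_\alpha)$ and $\tau_1(G)=c\,\tau_1(\tilde P)$; it thus suffices to prove $\tau_1(\tilde P_\alpha)\le\tau_1(\tilde P)$ for $\alpha$ large. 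I would then pass to the dual form $\tau_1(\tilde P_\alpha)=1-\mu_\alpha$, where $\mu_\alpha=\min_{i\ne j}\sum_k\min\big((\tilde P_\alpha)_{ik},(\tilde P_\alpha)_{jk}\big)$ is the minimal row‑overlap, and likewise $\tau_1(\tilde P)=1-\mu$, so the goal becomes $\mu_\alpha\ge\mu$ for $\alpha$ large.

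A row of $P_\alpha$ vanishes exactly when the corresponding node has no out‑edge (a node with an out‑edge has a neighbour at distance $1$, which receives positive mass), so $P_\alpha$ and $P$ have the same zero rows and Lemma~\ref{lem:smoothing_family} yields $\tilde P_\alpha\to\tilde P$ entrywise. Each of the finitely many row‑overlaps is a continuous function of the entries, hence converges to the corresponding overlap of $\tilde P$; in particular $\mu_\alpha\to\mu$ and every pair whose $\tilde P$‑overlap is strictly above $\mu$ still has overlap $>\mu$ for $\alpha$ large. If $\mu=0$ — equivalently $\tau_1(\tilde P)=1$, which occurs whenever two non‑dangling nodes have disjoint out‑neighbourhoods — then $\mu_\alpha\ge0=\mu$ holds for every $\alpha$ since overlaps are nonnegative, and we may take $\alpha_0$ arbitrary. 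So assume $\mu>0$; then it only remains to control the finitely many pairs $(i,j)$ that realise the minimal overlap $\mu$ of $\tilde P$.

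Fix such an optimal pair with $d^{\text{out}}_i\le d^{\text{out}}_j$. If one of the two nodes is dangling the bound is immediate: that row of $\tilde P_\alpha$ is $\mathbf 1^T/n$ while the other keeps all of its neighbour‑mass plus nonnegative extra mass on farther nodes, so the overlap can only increase. Hence take both non‑dangling, so $\mu>0$ forces $N_i\cap N_j\neq\emptyset$. Split the $\tilde P_\alpha$‑overlap of rows $i,j$ into the part carried by the common out‑neighbours $N_i\cap N_j$ and a remainder. Writing $Z^\alpha_i=\sum_k f_\alpha(\delta_\Gamma(i,k))$, the first part equals $|N_i\cap N_j|\,f_\alpha(1)/\max(Z^\alpha_i,Z^\alpha_j)$, which undershoots its limit $\mu=|N_i\cap N_j|/d^{\text{out}}_j$ by an amount of the order of the ``far mass'' $\sum_{m:\,\delta_\Gamma(j,m)\ge2}f_\alpha(\delta_\Gamma(j,m))$ that inflates $Z^\alpha_j$ past $d^{\text{out}}_j f_\alpha(1)$; the remainder is nonnegative and of order $f_\alpha(2)/f_\alpha(1)$, contributed by nodes reachable within distance $2$ from both $i$ and $j$ — such nodes exist because $\tau_1(\tilde P)<1$ forces the out‑neighbourhoods of a private neighbour of $i$ and a private neighbour of $j$ to meet. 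One then compares leading orders in $\alpha$ to conclude $\mu_\alpha\ge\mu$ on these pairs, which finishes the proof.

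The main obstacle is precisely this last comparison. One has to dominate the ``loss'' $|N_i\cap N_j|\,\big(\sum_{\delta_\Gamma(j,m)\ge2}f_\alpha(\delta_\Gamma(j,m))\big)\big/(d^{\text{out}}_j Z^\alpha_j)$ by the ``gain'' collected from the commonly reachable distance‑$2$ nodes, and the naive estimate that uses a single such node need not suffice when $j$ has many distance‑$2$ successors. Making it work requires using quantitatively the extremality of $(i,j)$ — that it minimises the row‑overlap of $\tilde P$, which constrains the out‑degrees and the neighbourhood intersections around $i$ and $j$ — together with the hypothesis $\tau_1(\tilde P)<1$; this is the delicate point of the argument and the place where any further structural information on $\Gamma$ (for instance strong connectivity) would be brought to bear.
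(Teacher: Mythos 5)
Your reduction to $\tau_1(\tilde P_\alpha)\le\tau_1(\tilde P)$ and your rewriting of the ergodicity coefficient in the overlap (coupling) form $\tau_1(M)=1-\min_{i\neq j}\sum_k\min(M_{ik},M_{jk})$ are both correct, as are the easy cases you dispose of ($\mu=0$, pairs involving a dangling node, and the non-extremal pairs via continuity). But the proof is not complete: the decisive step --- showing $\mu_\alpha\ge\mu$ for the pairs $(i,j)$ that attain the minimal overlap of $\tilde P$ --- is exactly where you stop, and you say so yourself. This is not a technicality that can be waved away: since Lemma~\ref{lem:smoothing_family} already gives $\tau_1(G_\alpha)\to\tau_1(G)$ for free, the one-sided inequality on the extremal pairs \emph{is} the entire content of the theorem. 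The reason your comparison stalls is structural. Writing $\epsilon_j^\alpha=\sum_{m:\delta_\Gamma(j,m)\ge 2}f_\alpha(\delta_\Gamma(j,m))/f_\alpha(1)$, the ``loss'' on the common out-neighbours is $\mu\,\epsilon_j^\alpha/(d_j^{\mathrm{out}}+\epsilon_j^\alpha)$, while the ``gain'' collected from newly reachable nodes is a sub-sum of the same quantity $\epsilon^\alpha/(d^{\mathrm{out}}+\epsilon^\alpha)$ restricted to nodes where \emph{both} rows put comparable mass. Both are $O\bigl(\max_{x\ge 2}f_\alpha(x)/f_\alpha(1)\bigr)$, so no asymptotic argument decides the sign; one must exhibit an explicit cancellation, and your appeal to ``a private neighbour of $i$ and a private neighbour of $j$ having intersecting out-neighbourhoods'' does not supply one (private neighbours need not exist, the common distance-$2$ node may already lie in $N_i\cap N_j$, and a single such node cannot in general dominate a loss proportional to all of $j$'s distance-$\ge 2$ mass).

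For comparison, the paper does not use the overlap form at all. It interpolates from $P$ to $P_\alpha$ one row at a time: if $\hat P_\alpha$ agrees with $P$ except in row $i$, it splits the maximizing row difference over the sets where row $i$ dominates row $j_0$, intersected with $E_i$ and $E_i^c$, and uses that on $E_i$ every entry of the modified row drops below $1/d_i^{\mathrm{out}}$ by a fixed amount $D_i^\alpha<0$ while the new entries on $E_i^c$ are $O(f_\alpha(\delta)/f_\alpha(1))$; the single-row inequality is then iterated $n$ times (with $\alpha_0$ taken as the largest of the thresholds). That argument confronts the same competition between two quantities of identical order, but it organizes the bookkeeping so that the comparison is made against $\tau_1(P)$ through the inclusion $\hat{\mathcal P}^\alpha_i\cap E_i\subseteq\mathcal P_i\cap E_i$ rather than against the extremal overlap directly. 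If you want to salvage your route, you would need an analogous quantitative input --- e.g.\ a lower bound on the overlap gain in terms of $\epsilon_i^\alpha$ and $\epsilon_j^\alpha$ themselves, not merely in terms of a single witness node --- and as written your proposal does not contain it.
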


\begin{proof}
From~\eqref{eq:ergodicity_coefficient} it is clear that the thesis holds if and only if $\tau_1(\tilde P_\alpha)\leq \tau_1(\tilde P)$. Moreover, since $P$ and $P_\alpha$ have the same zero rows and $\tilde P$ and $\tilde P_\alpha$ coincide on those rows, again from~\eqref{eq:ergodicity_coefficient} we see that it is enough to show that $\tau_1(P_\alpha)\leq \tau_1(P)$, for all $\alpha\geq \alpha_0$.  

Recall that, by Lemma~\ref{lem:smoothing_family}, we have $\displaystyle \lim_{\alpha\to \infty}P_\alpha=P$. 
If $\delta_\Gamma(i,j)=1$ for all $i,j$ then it is easy to see that $P_{\alpha}=P$ for all $\alpha$ and thus the thesis is straightforward. 

Assume that there exist  $i$ and $j$ such that $\delta_\Gamma(i,j)>1$.
	Consider the matrix $\hat{P}_{\alpha}$ obtained modifying the $i$-th row of $P$ into the $i$-th row of $P_\alpha$, i.e. let
	\begin{equation*}
		(\hat{P}_{\alpha})_{i,:}=
		\begin{bmatrix}
		\frac{f_{\alpha}(\delta_{\Gamma}(i,1))}{\sum_{\ell\neq i}f_{\alpha}(\delta_{\Gamma}(i,\ell))} &  \cdots &  0 &  \cdots & \frac{f_{\alpha}(\delta_{\Gamma}(i,n))}{ \sum_{\ell\neq i}f_{\alpha}(\delta_{\Gamma}(i,\ell))}
		\end{bmatrix}
	\end{equation*}
	 where the zero entry is the diagonal $(\hat P_\alpha)_{i,i}$,  and let  $(\hat{P}_{\alpha})_{j,:}=P_{j,:}$ for all $j \neq i$. Observe that if $k \in E_i:=\{k \hbox{ s.t. } (i,k) \in E \}$, it holds
	 
	 \begin{equation} \label{eq:pattern_ineq}
	 (\hat{P}_{\alpha})_{i,k}= \frac{1}{\displaystyle d_i+ \sum_{\ell : \delta_{\Gamma}(i,\ell) \geq 2} f_{\alpha}(\delta_\Gamma(i,\ell))/f_{\alpha}(1)} <  \frac{1}{d_i}=P_{i,k},
	 \end{equation}
	 were the strict inequality holds from our assumption. More precisely,
	 \begin{equation}\label{eq:difference}
	 	(\hat{P}_{\alpha})_{i,k}=P_{i,k}+D^{\alpha}_i
	 \end{equation}
	 where $D^{\alpha}_i:= 1/(d_i+ \sum_{\ell : \delta_{\Gamma}(i,\ell) \geq 2}f_{\alpha}(\delta_\Gamma(i,\ell))/f_{\alpha}(1))-1<0$ for all $\alpha>0$.
	 Let us define $i_0(\alpha), j_0(\alpha)$ such that

	 \begin{equation*}
	 	\tau(\hat{P}_{\alpha})=\frac{1}{2} \sum_{k} |(\hat{P}_{\alpha})_{i_0(\alpha),k}-(\hat{P}_{\alpha})_{j_0({\alpha}),k}|.
	 \end{equation*}
Since $\hat{P}_{\alpha}$ converges, there exists $\overline{\alpha}_0$ s.t. $i_0(\alpha)= i_0$ and $j_0(\alpha)= j_0$ for all $\alpha \geq \overline{\alpha}_0.$	Moreover, since $\hat{P}_{\alpha}$ differs from $P$ only in the $i$-th row, to prove $\tau_1(\hat{P}_{\alpha}) \leq \tau_1(P)$, it is enough to consider 	the case $i_0=i, \;j_0 \neq i $.

In the setting we are considering we have  $(\hat{P}_{\alpha})_{j_0,k}={P}_{j_0,k}$ for all $k$. So, let us define the following sets
\begin{align*}
	&\hat{\mathcal{P}}^{\alpha}_i:=\{k : (\hat{P}_{\alpha})_{i_0,k} \geq {P}_{j_0,k}\}, \qquad  \hat{\mathcal{P}}^{\alpha}_{j_0}:=\{k : (\hat{P}_{\alpha})_{i_0,k} <  {P}_{j_0,k}\},\\
	&{\mathcal{P}}_i:=\{k : {P}_{i_0,k} \geq {P}_{j_0,k}\}, \qquad  {\mathcal{P}}_{j_0}:=\{k :  {P}_{i_0,k} <  {P}_{j_0,k}\}.
\end{align*}

It is easy to check that	 

\begin{itemize}
	\item $\sum_{k \in \hat{\mathcal{P}}^{\alpha}_i} (\hat{P}_{\alpha})_{i_0,k}-P_{j_0,k}= \sum_{k \in \hat{\mathcal{P}}^{\alpha}_{j_0}} P_{j_0,k}- (\hat{P}_{\alpha})_{i_0,k}$ (using the fact that every row is stochastic);
	
	\item $\hat{\mathcal{P}}^{\alpha}_i  \cap E_i \subseteq  \mathcal{P}_i  \cap E_i $ (using \eqref{eq:pattern_ineq}).
\end{itemize}

From the above observations, we have

\begin{equation}\label{eq:many}
\begin{split}
\tau_1(\hat{P}_{\alpha})= & \sum_{k \in \hat{\mathcal{P}}^{\alpha}_i} = ((\hat{P}_{\alpha})_{i,k}-P_{j_0,k})  \\
= & \sum_{k \in \hat{\mathcal{P}}^{\alpha}_i\cap E_i} ((\hat{P}_{\alpha})_{i,k}-P_{j_0,k})+\sum_{k \in \hat{\mathcal{P}}^{\alpha}_i \cap E_i^C} ((\hat{P}_{\alpha})_{i,k}-P_{j_0,k} ) \\
= & \sum_{k \in \hat{\mathcal{P}}^{\alpha}_i\cap E_i} ({P}_{i,k}-P_{j_0,k})+\sum_{k \in \hat{\mathcal{P}}^{\alpha}_i\cap E_i}D_i^{\alpha}+\sum_{k \in \hat{\mathcal{P}}^{\alpha}_i \cap E_i^C} ((\hat{P}_{\alpha})_{i,k}-P_{j_0,k} ) \\
\leq & \sum_{k \in \hat{\mathcal{P}}^{\alpha}_i\cap E_i} ({P}_{i,k}-P_{j_0,k})+\sum_{k \in \hat{\mathcal{P}}^{\alpha}_i\cap E_i}D_i^{\alpha}+\sum_{k \in \hat{\mathcal{P}}^{\alpha}_i \cap E_i^C}  \frac{f_{\alpha}(\delta_\Gamma(i,k))/f_{\alpha}(1)}{\displaystyle d_i+ \sum_{\ell : \delta_{\Gamma}(i,\ell) \geq 2} f_{\alpha}(\delta_\Gamma(i,\ell))/f_{\alpha}(1)}  \\
\leq & \sum_{k \in {\mathcal{P}}_i\cap E_i} ({P}_{i,k}-P_{j_0,k})+\sum_{k \in \hat{\mathcal{P}}^{\alpha}_i\cap E_i}D_i^{\alpha}+\sum_{k \in \hat{\mathcal{P}}^{\alpha}_i \cap E_i^C}  \frac{f_{\alpha}(\delta_\Gamma(i,k))/f_{\alpha}(1)}{\displaystyle d_i+ \sum_{\ell : \delta_{\Gamma}(i,\ell) \geq 2} f_{\alpha}(\delta_\Gamma(i,\ell))/f_{\alpha}(1)} ,
\end{split}
\end{equation}
where in the first inequality we used \eqref{eq:difference} and the definition of $(\hat{P}_{\alpha})_{i,k}$ and, in the second one, the fact that $\hat{\mathcal{P}}^{\alpha}_i  \cap E_i \subseteq  \mathcal{P}_i  \cap E_i $.

Since $\sum_{k \in \hat{\mathcal{P}}^{\alpha}_i\cap E_i}D_i^\alpha < 0$ for all $\alpha>0$ and $\sum_{k \in \hat{\mathcal{P}}^{\alpha}_i \cap E_i^C} ( \frac{f_{\alpha}(\delta_\Gamma(i,k))/f_{\alpha}(1)}{d_i+ \sum_{\ell : \delta_{\Gamma}(i,\ell) \geq 2} f_{\alpha}(\delta_\Gamma(i,\ell))/f_{\alpha}(1)} ) \to 0$ as $\alpha \to \infty$ ($f_{\alpha}$ is a smoothing family), there exists $\alpha_0^{(1)} \geq \overline{\alpha}_0$ s.t. 
\begin{equation*}
    \sum_{k \in \hat{\mathcal{P}}^{\alpha}_i\cap E_i}D_i^\alpha +\sum_{k \in \hat{\mathcal{P}}^{\alpha}_i \cap E_i^C}  \frac{f_{\alpha}(\delta_\Gamma(i,k))/f_{\alpha}(1)}{ \displaystyle d_i+ \sum_{\ell : \delta_{\Gamma}(i,\ell) \geq 2} f_{\alpha}(\delta_\Gamma(i,\ell))/f_{\alpha}(1)}  <0   \hbox{ for all  } \alpha \geq \alpha_0^{(1)}.
\end{equation*}
Finally, observing that  $\sum_{k \in {\mathcal{P}}_i\cap E_i} ({P}_{i,k}-P_{j_0,k}) \leq \frac{1}{2}  \sum_{k} |{P}_{i,k}-P_{j_0,k}| \leq \frac{1}{2} \max_{s,m} \sum_{k}|{P}_{s,k}-P_{m,k}|= \tau_1(P)$, we obtain from \eqref{eq:many}, that $\tau_1(\hat{P}_{\alpha}) \leq \tau_1(P)$ for all $\alpha \geq \alpha_0^{(1)}$.

Let us now define $\hat{P}_{\alpha}^{(1)}=\hat{P}_{\alpha}$. Observe that the above reasoning is still valid using $\hat{P}_{\alpha}^{(1)}$ in place of $P$ and considering as $\hat{P}_{\alpha}$ the matrix obtained modifying one more row of the new $\hat{P}_{\alpha}^{(1)}$ into the same row of $P_\alpha$ (overall modifying exactly two rows of the original  $P$ into the corresponding rows of $P_\alpha$). We call the matrix obtained in this way $\hat{P}_\alpha^{(2)}$. In this case we have that  there exists $\alpha_0^{(2)}$ such that, for all $\alpha \geq \alpha_0^{(2)}$, $\tau_1({\hat{P}_\alpha}^{(2)})\leq \tau_1(\hat{P}_{\alpha}^{(1)})$.

Applying this argument $n$ times we obtain that there exists $\alpha_0^{(n)}=:\alpha_0$ such that, for all $\alpha\geq \alpha_0$, the following  chain of inequalities holds
\begin{equation*}
	\tau_1({{P}_\alpha})=\tau_1(\hat{P}_{\alpha}^{(n)}) \leq \cdots  \leq \tau_1(\hat{P}_{\alpha}^{(1)})\leq \tau_1({P}),
\end{equation*}
concluding the proof.
\end{proof}

\begin{figure}[htbp]
	\centering
\includegraphics[width=\columnwidth]{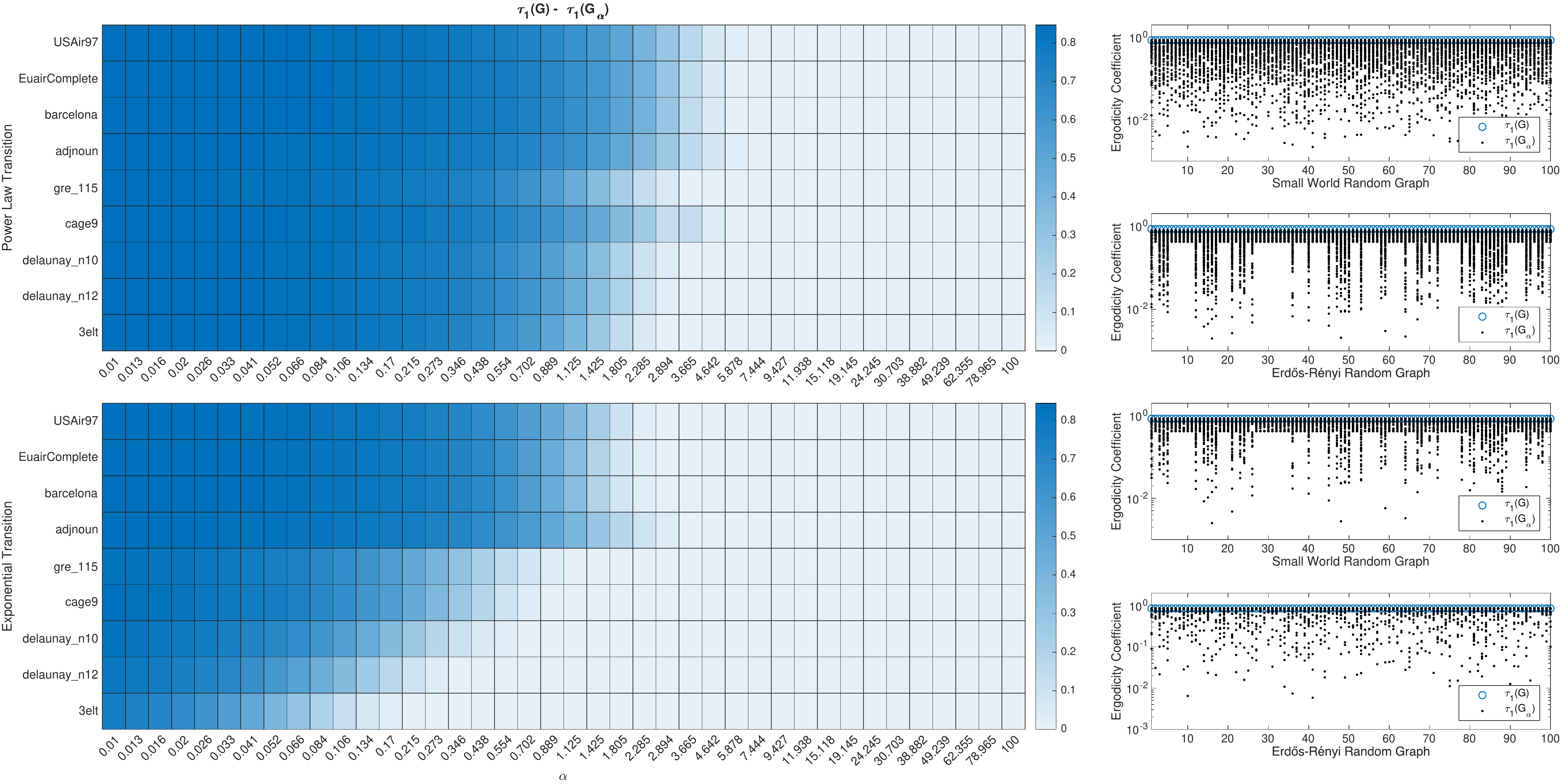}
\caption{{Behavior of the ergodicity coefficient $\tau_1$ for both the power law  $f_\alpha(x) = x^{-\alpha}$  and the exponential $f_\alpha(x) = \exp(-\alpha x)$ transitions. The heat-maps on the left-hand side represent the difference $\tau_1(G)-\tau_1(G_\alpha)$ for growing values of $\alpha$ on   a large set of test networks. Whereas, the panels on the right-hand side  report the behavior of the ergodic coefficients for one hundred random graphs of  Small World and Erd\H{o}s-R\'enyi type. For every graph the ergodic coefficient for the nonlocal PageRank is computed for one hundred different random alphas and, in every case, we observe that $\tau_1(G_\alpha) \leq \tau_1(G)$.}}
	\label{fig:ergodicity_bound}
\end{figure}
{The result above shows that when the nonlocality parameter $\alpha$ is large enough, the corresponding PageRank vector is more stable with respect to structured perturbations of $\Gamma$ or, equivalently, with respect to perturbations of the adjacency matrix. In fact, if $\mathbf s$ and $\mathbf s_\alpha$ are the PageRank and the nonlocal PageRank vectors, respectively, the inequality $\tau_1(G_\alpha)\leq \tau_1(G)$ implies $\operatorname{cond}_1(\mathbf s_\alpha)\leq \operatorname{cond}_1(\mathbf s)$.  On the other hand, extensive numerical experiments show that this inequality actually holds for all values of $\alpha$. We illustrate this result in Figure~\ref{fig:ergodicity_bound} where we compare the ergodicity coefficient $\tau_1(G_\alpha)$ of the nonlocal PageRank (for $\delta=\delta_\Gamma$)  with the coefficient $\tau_1(G)$ for the standard PageRank, for a broad range of values of the parameter $\alpha$ and for  both the example smoothing families of functions $f_\alpha(x)=x^{-\alpha}$ and $f_\alpha(x) = e^{-\alpha x}$. This figure shows that the nonlocal PageRank is actually more stable than its local counterpart for all values of $\alpha$.}  %

Clearly, the norm-1 bound we have obtained provides an indication on the entry-wise variation of the PageRank vector as well. To showcase this effect we consider the case of the undirected cycle graph $\mathcal{C}_n$, i.e., the graph on $n$ nodes containing a single cycle through all of them. It is easy to observe that for such graph the non--normalized PageRank vector  is $\mathbf{s}_\alpha = \boldsymbol{1}$, $\forall \alpha \geq 0$. Let us now add  a directed edge between the $\ell$th and the $1$st node of the graph, and evaluate how the PageRank vector for the relative value of $\alpha$ is modified. It is known that for $\alpha = \infty$ this modification produces a strong localization effect on the standard PageRank vector $\mathbf{s}_\infty$~\cite[Theorem~8.1]{MR3706916}. The bound obtained in Theorem~\ref{pro:conditioning} suggests that the relative change on the vector $\mathbf{s}_\alpha$ decreases with smaller values of $\alpha$, i.e.\ the localization effect is milder on the nonlocal PageRank vector. This is shown in  Figure~\ref{fig:cyclegraph} where we plot the PageRank vector of the cycle $\mathcal C_{100}$ perturbed with one additional edge that connects nodes 1 and 40,  for different values of $\alpha$ and for $\delta=\delta_\Gamma$.   Consistently with the analysis carried out in~\cite{MR3706916}, we observe two localized peaks that appear in the PageRank vector $\mathbf s_\infty$  entries corresponding to the nodes 1 and 40. At the same time, nonlocal versions of PageRank smooth out these peaks as soon as we let the value of $\alpha$ decrease.

\begin{figure}[t]
	\centering
	\includegraphics[width=\columnwidth]{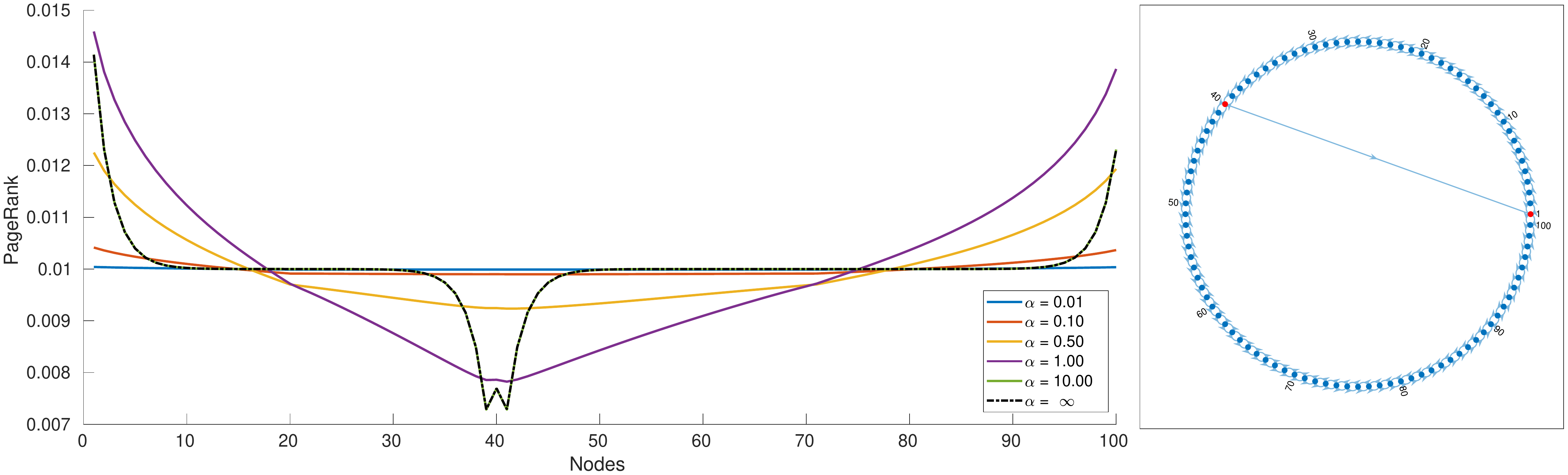}
	\caption{Locality of the PageRank vector for the undirected cycle graph with $n = 100$ nodes when a single directed edge between nodes $40$ and $1$ is added to the graph (right panel). The PageRank vector for the undirected cycle is the vector $\boldsymbol{1}$, (represented as the dashed black line). The smaller the value of $\alpha$ is, the more stable the measure is with respect to the addition of this new edge since the curves for the smaller values of $\alpha$ tends to the PageRank of the unmodified graph. On the other hand, for growing values of $\alpha$ the the curve representing the value of the nonlocal PageRank converges to curve of the standard PageRank (represented as the dot--dashed black line on the left panel).}
	\label{fig:cyclegraph}
\end{figure}
In Figure~\ref{fig:localization} we show the localization behavior of nonlocal PageRank vectors on the {\tt adjnoun} real-world network. As expected,  the standard PageRank algorithm  concentrates the measure around few nodes (hubs)  and assigns small and almost indiscernible values to the nodes that occupy the lower part of the ranking, whereas tuning the parameter $\alpha$ allows us spread the measure more evenly. 

\begin{figure}[htbp]
	\centering
	\includegraphics[width=0.75\columnwidth]{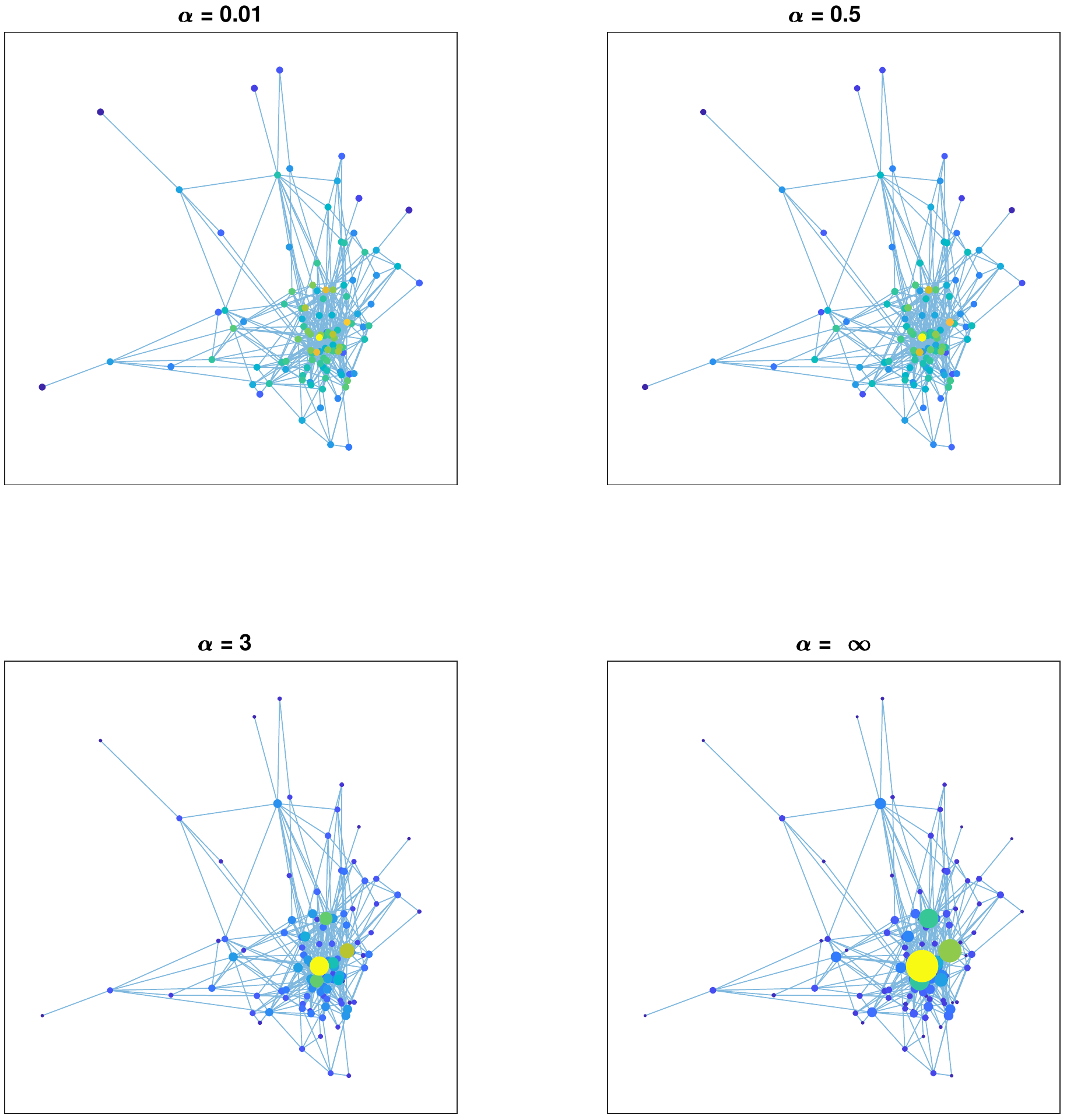}
	\caption{Localization of the PageRank values for the graph \texttt{Newman/adjnoun}  contains the network of common adjective and noun adjacencies for the novel ``David Copperfield'' by C. Dickens, as described in~\cite{MR2282139}. We have computed the PageRank vector for different values of $\alpha$ and plotted here the same graph with the nodes of the size representing the value assigned them by the algorithm. All values are consistently scaled. We observe that as $\alpha$ decreases the localization of the measure is sensibly reduced.}
	\label{fig:localization}
\end{figure}

The nonlocality properties of $\mathbf s_\alpha$ are useful in a number of situations. As an example of the improvements one can obtain by tuning the parameter $\alpha$, while fixing the choice of the distance to the shortest-path distance $\delta_\Gamma$, in the next section we consider the link prediction task with rooted PageRank similarity. %

\subsection{Link prediction}\label{sec:link_prediction}

Link prediction is an important task in network analysis, which consists of  the problem of predicting the existence of one or more missing (unobserved) edges in a given instance of a network $\Gamma$~\cite{doi:10.1002/asi.20591,MR3014051}. More precisely, we suppose having a snapshot $\Gamma_0 = (V,E_0)$ at time $t_0$ of a graph, and we want to guess what edges will be added at a subsequent time step $t_1$, in which the graph becomes $\Gamma_1 = (V,E_1)$, with $E_1 = E_0 \cup E_{\text{add}}$, and $E_{\text{add}} \subset V \times V \setminus E_0$. Two typical scenarios where this problem applies are the case of an evolving network and the case of  network data  affected by noise,  where it is suspected that a certain number of edges are missing.

A  successful approach for  link prediction  works by first assigning every edge $e_{i \rightarrow j}$ in $V \times V \setminus E_0$  a score, $\score(i,j)$, based on the graph $\Gamma_0$. In this way a ranked list of edges is produced, in decreasing order of $\score(i,j)$, and the new edges defining $E_1$ are taken as the edges with higher score.

Rooted (or seeded) PageRank is an established method for assigning such scores, based on the PageRank transition matrix. %
By extending that method, we consider a nonlocal rooted PageRank similarity, where the whole matrix of the scores of all the edges of type $e_{i \rightarrow j}$ for the parameter $\alpha$ is defined by
\begin{equation}\label{eq:similarity_for_link_prediction}
(S_\alpha)_{i,j} = \score_\alpha(i,j) = (X_\alpha + X_\alpha^T)_{i,j}, \qquad X_\alpha = (1-c)(I - c P_\alpha^T)^{-1},
\end{equation}
being $P_\alpha$ the nonlocal transition probability matrix in~\eqref{eq:levy_transition}.  Note that, due to Lemma~\ref{lem:smoothing_family}, we have  that $\score_\alpha$ coincides with the standard rooted PageRank similarity score when $\alpha\to\infty$. 

{In what follows we compare the link prediction performance of the nonlocal PageRank with the one obtained by using the standard PageRank. The test networks are both real-world examples: \texttt{adjnoun}, \texttt{USAir97}, \texttt{cage9}, \texttt{EuAirComplete}; and synthetic graphs \texttt{delaunay\_n10}, \texttt{delaunay\_n12}, and \texttt{3elt}. The network \texttt{EuAirComplete} is the flattened version of the multilayer network \texttt{EuAir} from~\cite{Cardillo2013}, representing the connection of airports in the European Union by the different airlines operating there. We subtract from each of the graphs $10\%$ of the edges chosen uniformly at random and we try to guess back all of them. To determine the parameters needed for the predictions, namely the $\alpha$ and $c$ parameters for the nonlocal Pagerank, and the $c$ parameter for the rooted PageRank, we apply a 10-fold cross-validation procedure on the smaller graph. The parameters are chosen on the grids $c\in\{0.1, 0.2, 0.3, 0.4, 0.5, 0.6, 0.7, 0.85, 0.9, 0.99\}$ and $\alpha \in \{0.1, 0.2, 0.3, 0.4, 0.5, 0.6, 0.7, 0.8, 0.9, 1, 2, 3, 5\}$. The whole procedure is then repeated fifteen times and the  results obtained are then reported in Figure~\ref{fig:link_prediction} as boxplots showing median and quartiles of the obtained prediction accuracy on the fifteen trials. What we observe is that harvesting information from far-away nodes, i.e., exploiting the slower decay of the similarity matrix $S_\alpha$ in~\eqref{eq:similarity_for_link_prediction}, yields better results. When such behavior is not observed the cross-validation pushes then the solution to be the one obtained with the rooted PageRank.}
For what concerns the cost, here it is dominated by the $O(n^3)$ matrix inversion in~\eqref{eq:similarity_for_link_prediction}, thus the overall cost of both the standard and the nonlocal PageRank are comparable. 
\begin{figure}[t]
	\centering
	\includegraphics[width=.8\textwidth]{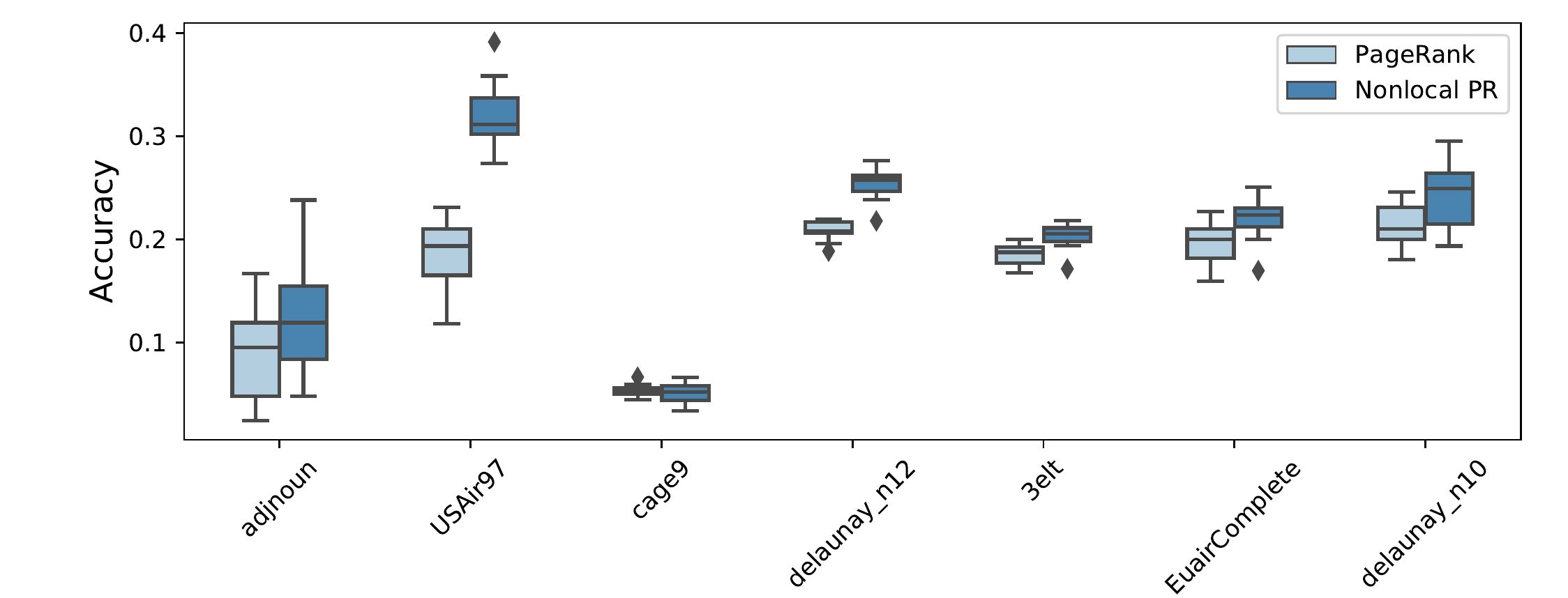}
	\caption{{Boxplot showing median and quartiles  of prediction accuracy over 15 trials on seven synthetic benchmarks and real-world datasets, for standard rooted PageRank versus nonlocal PageRank with smoothing function $f_\alpha(x) = x^{-\alpha}$. For each trial we remove  $10\%$ of edges chosen uniformly at random and we try to guess back all of them, choosing the parameters $\alpha$ and $c$ for nonlocal PageRank and the parameter $c$ for rooted PageRank via 10-fold cross validation on the smaller graph (with the removed edges) using the grids $c\in\{0.1, 0.2, 0.3, 0.4, 0.5, 0.6, 0.7, 0.85, 0.9, 0.99\}$ and $\alpha \in \{0.1, 0.2, 0.3, 0.4, 0.5, 0.6, 0.7, 0.8, 0.9, 1, 2, 3, 5\}$. }}
	\label{fig:link_prediction}
\end{figure}

\section{Choosing the appropriate distance}
\label{sec:choosing_the_distance}

The shortest path distance in Definition~\ref{def:shortestpath_distance} is not the only possible choice for generating the transition matrix in~\eqref{eq:levy_transition}. Being able to choose the distance function $\delta$ is an additional degree of flexibility of the proposed nonlocal PageRank which allows us to adapt the model more tightly to the problem. %
In principle any graph distance, respectively metric, can be adopted to define the transition probability matrix 
and the choice is a matter of modeling reasons. 

For illustration purposes, in the next section  we first describe an example of graph metric obtained by using the logarithmic distance. Then, in Section~\ref{sec:metro} we consider the London underground train multilayer network and define a problem-dependent \textit{metro distance}. This distance takes into account for the multiple layers and allows us to improve the centrality assignment of the train stations, when compared to independent station usage data we collected from \cite{london_usage}.

\subsection{An example of digraph metric: Logarithmic distance}

We illustrate here the behaviour of the nonlocal PageRank obtained with a distance that generates a metric on the graph, i.e., that is both symmetric, and satisfies the triangle inequality. To introduce such metric, called Logarithmic distance~\cite{MR2822195,MR2755905}, we need to start from a particular \emph{proximity} (\emph{similarity}) measure, $S = (s_{i,j}) = (s(i,j))$. Such measure is defined in terms of  the following  Laplacian matrix~$L$ of the digraph~$\Gamma$
\begin{equation}\label{eq:laplacian_matrix}
L = D_{\mathrm{out}} - A, 
\end{equation}
as
\begin{equation}\label{eq:similarity_matrix}S = (I + L)^{-1}\, ,
\end{equation}
where $D_{\mathrm{out}}$ is the diagonal matrix of the out degrees in~\eqref{eq:diagonalmatrix_of_out_degrees} and $A$ is the adjacency matrix of $\Gamma$.

This measure still accounts for the lack of symmetry of the underlying digraph, i.e., of its adjacency matrix, and satisfies both the \emph{transition inequality} and \emph{graph bottleneck identity}, i.e., it is such that $s_{i,j} s_{j,k} \leq s_{i,k}s_{j,j}$, and $s_{i,j}s_{j,k} = s_{i,k}s_{j,j}$ if and only if every path from $i$ to $k$ contains $j$. In the case of undirected graphs \eqref{eq:similarity_matrix} is usually called the regularized Laplacian kernel. Such definition is indeed well posed, i.e., we are ensured that the matrix $I+L$ can be inverted,  because $I+L$ is an example of a nonsingular $M$-matrix (cf.~\cite{MR1298430}). Note that this is sufficient to guarantee also that all the elements of $S$ are nonnegative; see~\cite{MR1298430} for further details.

To obtain the Logarithmic distance based on the similarity $S$ we then build the matrix $H = (h_{i,j})$ and the vector $\mathbf{h}$ defined as 
\begin{equation}\label{eq:logarithmic_map}
h_{i,j} = \log(s_{i,j}) \leq 0, \quad \mathbf{h} = (h_{1,1},h_{2,2},\ldots,h_{n,n})^T
\end{equation}
from which we define the \emph{logarithmic distance} as
\begin{equation}\label{eq:logarithmic_distance}
\delta_{\mathrm{log}}(i,j) =  \left(\frac{1}{2}(U + U^T)\right)_{i,j}, \qquad U = \mathbf{h}\boldsymbol{1}^T - H.
\end{equation}
We stress that $\delta_{\mathrm{log}}$, differently from the shortest--path distance in~Definition~\ref{def:shortestpath_distance}, generates a metric on the digraph,  see e.g.\ \cite[Theorem~1]{MR2822195}. This can also be observed by the example network in Figure~\ref{fig:distance_comparison} in which we show the comparison of the entries of the distance matrix for the shortest path distance (central panel) and for the logarithmic distance (right panel) as compared to the pattern of the adjacency matrix of the real world network {\tt gre\_115} (left panel).
\begin{figure}[t]
	\centering
	\includegraphics[width=0.9\columnwidth]{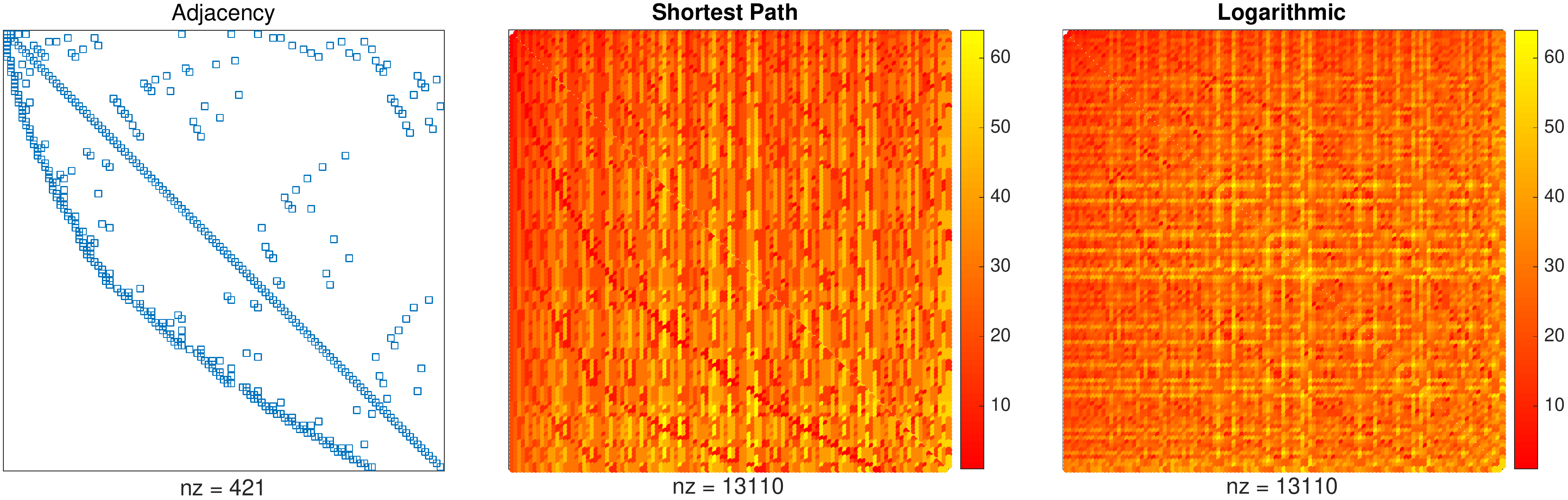}
	\caption{\texttt{HB/gre\_115} digraph. The figure shows the adjacency and distance matrices for the loop--less version of the \texttt{HB/gre\_115} digraph.}
	\label{fig:distance_comparison}
\end{figure}

Instead, in order to better grasp the differences between the centralities obtained, in Figure  \ref{fig:expvslog}  we scatter plot the behavior of the nonlocal PageRank on  the dataset {\tt USAir97}, for the two different smoothing functions $f(x)_\alpha=1/x^\alpha$ and $f_\alpha(x) = e^{-\alpha x}$ and the two choices of graph distances $\delta=\delta_\Gamma$ and $\delta=\delta_{\mathrm{log}}$.

\begin{figure}[t]
	\centering
	\includegraphics[width=0.9\columnwidth]{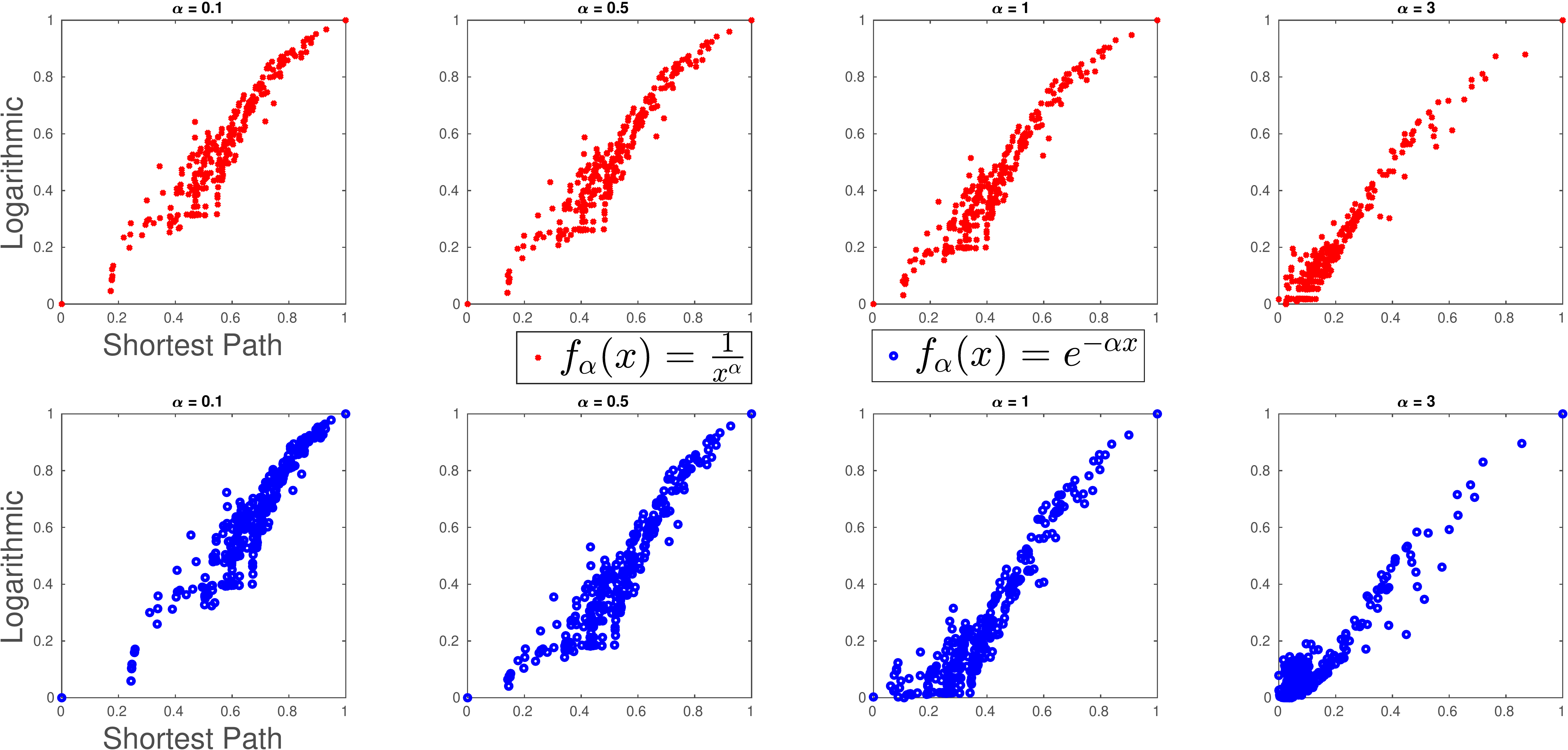}
	\caption{\texttt{USAir97} digraph. The figure shows scatter plots of Nonlocal PageRank obtained using shortest path distance  vs Nonlocal PageRank obtained using logarithmic distance for different  $f_\alpha(x)$ and different values of $\alpha$.}
	\label{fig:expvslog}
\end{figure}

\subsection{Metro distance and the London's  train multilayer network}\label{sec:metro}
In this final example scenario of applicability for the proposed nonlocal PageRank model we consider the ranking in term of importance of urban railways. %
Tools from network science have proved to be valuable in the study of urban transport \cite{derrible2012network,To2015} and here we consider the use of nonlocal PageRank network centrality in the case of
the London train network. Nodes in the network represent stations, and we seek a
distance $\delta$ so that the resulting centrality measure correlates well with passenger usage. Such a measure,
which requires only the topological connectivity structure, offers helpful information
at the design stage. More importantly, it can be used in what-if-scenario testing, in
order to predict the effect of changes, including unplanned network disruptions. 

The London train network is an undirected transportation network representing connections between tube train stations of the city of London. The dataset {\tt tube} we consider here is a multilayer version of the train network, where each underground line corresponds to one layer. The dataset has been generated using the data from \cite{de2014navigability} as baseline.  The aggregate network consists of one connected component with 271 nodes and 315 edges with nonzero weights. Each edge encodes the information about the membership of a  given node to one---or more than one in the case of  intersections---of the $k=11$ different underground lines.
We collect additional passenger data from \cite{london_usage}: for each train station of the above
network, we collect the number of passengers entering or exiting that station per
year. We collect data for ten years: from 2008 to 2017. Our data is publicly available
at \url{https://github.com/Cirdans-Home/NonLocalPageRank}.

Motivated by the question of whether we can identify highly populated stations by exploiting only the topology of connections between stations,  in this section we study the behavior of the nonlocal PageRank centrality with different choices of the distance function $\delta$ and of the decaying parameter $\alpha$. In particular, in the next subsection we will design a distance that is specifically conceived for this issue and we will show that this indeed helps boosting the performance of the centrality model in this context.

\subsubsection*{The \textit{Metro Distance}}
Every day experience using underground trains suggests that not all the paths between two connected nodes $i$ and $j$ of the network are equally attractive: we posit that users tend to prefer paths that avoid line changes (or minimize them) even if that means choosing a path that is longer in terms of the number of stations that the path involves.

This argument suggests that the shortest path distance $\delta_\Gamma(i,j)$ is not appropriate to faithfully model  the graph exploration of a rail traveler. Hence, we consider here a natural  modification of the shortest path distance $\delta_\Gamma$, which we will call \textit{metro distance} and which we define as follows 
\begin{equation}\label{eq:metro_distance}
\delta_M(i,j) = \begin{cases}
\delta_\Gamma(i,j) & \hbox{ if } i,j \hbox{ are on \textit{the same metro line}}, \\
\delta_\Gamma(i,j)+ C(i,j) & \hbox{ if } i,j \hbox{ are on \textit{different metro lines}}, 
\end{cases}\,
\end{equation}
where $C(i,j)$ is the number of times a traveler needs to change  train line when traveling from node $i$ to node $j$.
In other words, the metro distance penalizes  paths using as penalization parameter the number of times the passenger changes line. 

Our aim is now to test the extent to which the nonlocal PageRank centrality with $\delta_M$ can identify nodes that perform better, in terms of total passenger usage, than other PageRanks.
In order to measure  this type of  performance for the different ranking strategies, we consider the \emph{Intersection Similarity} (ISIM), whose definition we briefly recall. The intersection similarity is a measure that compares two ranked lists that may not contain the same elements.  It is defined as follows~\cite{fagin2003comparing}: let $\b p, \b q$ be two  ranked lists of $n$ elements we want to compare. %
The intersection similarity of $\b p$ and $\b q$ is the vector $\text{ISIM}(\b p,\b q)$ with entries 
\begin{equation*}\label{eq:isim}
\text{ISIM}(\b p,\b q)_k = \frac{1}{k}\sum_{j=1}^k\frac{\left|\Delta\left((p_1,\dots, p_j),(q_1,\dots,q_j)\right)\right|}{2j}, \qquad k=1,\dots, n
\end{equation*}
where, for sets $S,T$, $|S|$ denotes the cardinality and $\Delta$ is the symmetric difference operator $\Delta(S,T) := (S\setminus T)\cup (T\setminus S)$.

When the first $k$ entries in $\b p$ and $\b q$ are completely different ISIM$(\b p,\b q)_k$ is equal to 1, whereas ISIM$(\b p,\b q)_k=0$ 
if and only if the first $k$ entries in $\b p$ and $\b q$ coincide exactly.
More in general, lower values in  ISIM$(\b p,\b q)$ imply a better matching between $\b p$ and $\b q$. 

In Figure~\ref{fig:isim_comparison} we aim at comparing the top fifteen stations identified by the following three models 
\begin{itemize}
	\item the nonlocal PageRank with $\delta_\Gamma$ (``\textit{SP distance}''),
	\item the nonlocal PageRank with $\delta_M$ (``\textit{Metro distance}'') ,
	\item the standard PageRank. 
\end{itemize}
with the ``ground truth'', i.e., the actual fifteen stations with the
highest number of passengers, whose corresponding ranked list we denote by $\rho$. In particular, we compute the intersection similarity between the ranking of any of the three above PageRanks and $\rho$, which we denote respectively by ISIM$(\delta_\Gamma)$, ISIM$(\delta_M)$ and ISIM$(pr)$.

The two curves in the left panel of Figure~\ref{fig:isim_comparison} show the ratios 
$$
\frac{\text{ISIM}(\delta_\Gamma)_{15}}{\text{ISIM}(pr)_{15}} \qquad \text{and} \qquad \frac{\text{ISIM}(\delta_M)_{15}}{\text{ISIM}(pr)_{15}}
$$
with blue and red lines, respectively, 
for different values of the parameter $\alpha$, computed with the power-law decaying function $f_\alpha(x)=1/x^\alpha$. 

Whereas, the central panel compares the cumulative sum
of passenger usage values for the 15 top ranked stations for the three rankings. This comparison is done by showing both the ratio between the cumulative sum of passenger usage values corresponding to $\delta_\Gamma$ and the one corresponding to the standard PageRank (blue line) and the ratio between the one corresponding to $\delta_M$ and the one corresponding to the standard PageRank (red line), for different values of  $\alpha$. 
In both panels we observe the metro distance $\delta_M$ performing generally better than the standard shortest path distance, with relative optimal performance obtained for $\alpha=1.7$. 
This is further highlighted by the panel on the right of Figure \ref{fig:isim_comparison}, where we compare the first 25 entries of the three different intersection similarity vectors ISIM$(\delta_\Gamma)$, ISIM$(\delta_M)$ and ISIM$(pr)$, for $\alpha=1.7$.

\begin{figure}[htbp]
	\centering
	\includegraphics[width=\columnwidth]{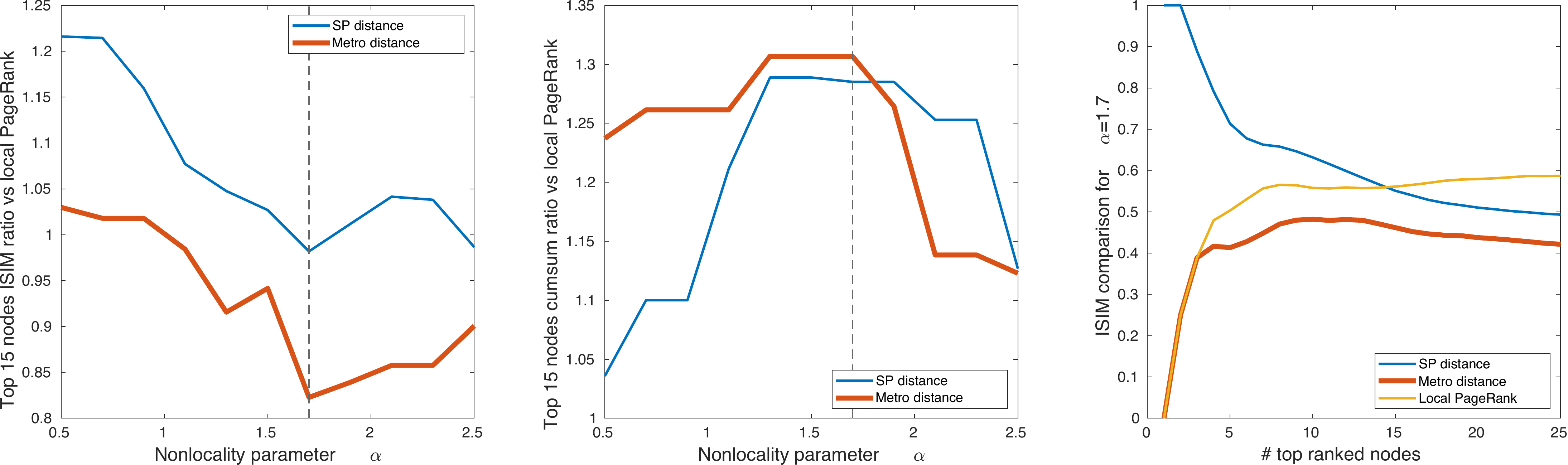}
	\caption{Non-local PageRank ranking performance. Year 2017}
	\label{fig:isim_comparison}
\end{figure}

\begin{table}[ht]
	\begin{center}
		\begin{adjustbox}{width=.9\textwidth}
			\sf \small
			\begin{tabular}{|c|c|c|c|c|c|c|c|c|c|c|c|}
				\hline
				& \textbf{$k$} & \textbf{2017} & \textbf{2016} & \textbf{2015} & \textbf{2014} & \textbf{2013} & \textbf{2012} & \textbf{2011} & \textbf{2010} & \textbf{2009} & \textbf{2008}\\
				\hline \hline 
				\multirow{4}{*}{\textbf{Ground Truth}} & 5 & 421.6895 & 433.7675 & 437.0499 & 443.6953 & 413.9785 & 399.7917 & 386.0158 & 366.9199 & 353.7508 & 359.7531 \\ 
				& 15 & 891.2556 & 918.167 & 920.4402 & 918.3901 & 867.6314 & 835.8205 & 805.7167 & 755.9158 & 731.6231 & 739.0077 \\ 
				& 45 & 1614.8047 & 1665.0602 & 1615.3159 & 1615.757 & 1542.1181 & 1490.7425 & 1430.1159 & 1374.3307 & 1329.411 & 1348.2863 \\ 
				
				\hline 
				\multirow{4}{*}{\textbf{``Local'' PageRank}} & 5 & 286.787 & 294.6265 & 288.0692 & 286.9402 & 273.3404 & 264.2259 & 255.6953 & 245.6262 & 229.4468 & 230.0058 \\
				& 15 & 580.5896 & 592.7364 & 588.1613 & 589.1657 & 545.3233 & 522.9533 & 504.7835 & 480.4868 & 462.1433 & 473.7567 \\
				& 45 & 1260.4716 & 1294.76 & 1259.7501 & 1259.2124 & 1204.2262 & 1159.854 & 1116.8103 & 1070.5588 & 1036.5845 & 1050.4783 \\

				\hline
				\multirow{4}{*}{\textbf{Nonlocal PageRank  $\boldsymbol{\delta_\Gamma}$}} & 5 & 341.4023 & 349.6481 & 349.839 & 353.9939 & 323.0698 & 311.4126 & 297.0677 & 279.9141 & 268.7073 & 271.4097 \\ 
				& 15 & 746.1333 & 768.6693 & 762.5181 & 751.9844 & 723.7442 & 699.2322 & 675.9891 & 648.6213 & 624.0347 & 629.1298 \\ 
				& 45 & 1302.5842 & 1320.4892 & 1320.0771 & 1324.7913 & 1269.6994 & 1222.4786 & 1181.0192 & 1150.9596 & 1123.1939 & 1133.568 \\
				
				\hline
				\multirow{4}{*}{\textbf{Nonlocal PageRank $\boldsymbol{\delta_M}$}} & 5 & 341.4023 & 349.6481 & 349.839 & 353.9939 & 323.0698 & 311.4126 & 297.0677 & 279.9141 & 268.7073 & 271.4097 \\ 
				& 15 & 758.6355 & 785.5156 & 774.1294 & 761.5849 & 733.9973 & 709.7501 & 686.8022 & 659.9881 & 634.5561 & 639.799 \\ 
				& 45 & 1352.7352 & 1373.8639 & 1362.7264 & 1366.1506 & 1300.7856 & 1251.9737 & 1203.3251 & 1154.5498 & 1124.5568 & 1139.3027 \\
				
				\hline
			\end{tabular}
		\end{adjustbox}
	\end{center}
	\vspace{1em}
	\caption{Millions of passengers per year using the top $k=5,15,45$ stations ranked according to the standard ``\textit{local PageRank}'' the nonlocal PageRank with $\delta_\Gamma$ ($\alpha=1.7$) and the nonlocal PageRank with $\delta_M$ ($\alpha=1.7$). We consider the range 2017--2008. }\label{tab:million_passegers}
\end{table}

\begin{table}[htpb]
	\begin{center}
		\begin{adjustbox}{width=.9\textwidth}
			\sf \small
			\begin{tabular}{llllllll}
				\textbf{Ground Truth} &  & \textbf{``Local'' PageRank} &  & \textbf{Nonlocal PageRank  $\boldsymbol{\delta_\Gamma}$} &  & \textbf{Nonlocal PageRank $\boldsymbol{\delta_M}$} &  \\ 
				\hline 
				King's Cross & 97.9183 & King's Cross & 97.9183 & Green Park & 39.3382 & King's Cross & 97.9183 \\ 
				Waterloo & 91.2706 & Baker Str. & 28.7846 & Baker Str. & 28.7846 & Baker Str. & 28.7846 \\ 
				Oxford Circus & 84.0906 & Paddington & 48.8225 & Oxford Circus & 84.0906 & Green Park & 39.3382 \\ 
				Victoria & 79.3593 & Earl's Court & 19.991 & King's Cross & 97.9183 & Oxford Circus & 84.0906 \\ 
				London Bridge & 69.0507 & Waterloo & 91.2706 & Waterloo & 91.2706 & Waterloo & 91.2706 \\ 
				Liverpool Str. & 67.7402 & Turnham Green & 6.1552 & Bond Str. & 38.8027 & Paddington & 48.8225 \\ 
				Stratford & 61.9904 & Green Park & 39.3382 & Bank & 30.8981 & Bank & 30.8981 \\ 
				Canary Wharf & 50.9136 & Oxford Circus & 84.0906 & Westminster & 25.5954 & Bond Str. & 38.8027 \\ 
				Paddington & 48.8225 & Stockwell & 11.6971 & Paddington & 48.8225 & Earl's Court & 19.991 \\ 
				Euston & 43.0737 & Liverpool Str. & 67.7402 & Liverpool Str. & 67.7402 & Euston & 43.0737 \\ 
				\hline 
			\end{tabular}
		\end{adjustbox}
	\end{center}
	\vspace{1em}
	\caption{Ten London train stations with highest ranking value, according to the standard ``\textit{local PageRank}'',  the nonlocal PageRank with $\delta_\Gamma$ and the nonlocal PageRank with $\delta_M$ ($\alpha=1.7$ in both cases). Year  2017. }\label{tab:ten_stations}
\end{table}

As shown in Figure~\ref{fig:isim_comparison}, 
a proper choice of the parameter $\alpha$,
allows for a more accurate ranking than the local PageRank. In particular, choosing $\alpha=1.7$ and the metro distance $\delta_M$, we are able to match the ground truth ranking more closely if compared with the local PageRank and the nonlocal PageRank employing the distance $\delta_\Gamma$. Finally, as the right panel of Figure~\ref{fig:isim_comparison} shows,
even thought  the ISIM performance with respect to the ground truth ranking coincides on the top ranked nodes for the standard PageRank and the nonlocal PageRank with \textit{Metro distance}, the nonlocal model allows us to obtain sensibly better performance in terms of ISIM when a greater number of ranked nodes is taken into account.  This issue is further corroborated from the results presented in Table~\ref{tab:million_passegers} where a similar behavior is observed across the ten--year span usage data;  here we compare the number of
passengers using the top $k = 5, 15, 45$ stations identified by local/nonlocal Pagerank as before. Finally, in Table~\ref{tab:ten_stations} and Figure~\ref{fig:london_groundtruth} we show the name and the geographic collocations of the top $10$ ranked stations according to the considered different ranking algorithms.
The overall emerging experimental evidence from Figures~\ref{fig:isim_comparison},~\ref{fig:london_groundtruth} and Tables~\ref{tab:million_passegers},~\ref{tab:ten_stations}, highlights how
the flexibility of the proposed model allows us to design centrality that better adapt to the specific problem. %
While the nonlocal PageRank model with the metro distance does not yield a perfect matching with the ground truth, it outperforms other PageRank models and obtains remarkable performance which we find particularly interesting given that the model  exploits only the topological structure of nodes and edges.  
\begin{figure}[htpb]
	\centering
	\includegraphics[width=0.45\columnwidth,clip,trim=2cm 3cm 1.5cm 3cm]{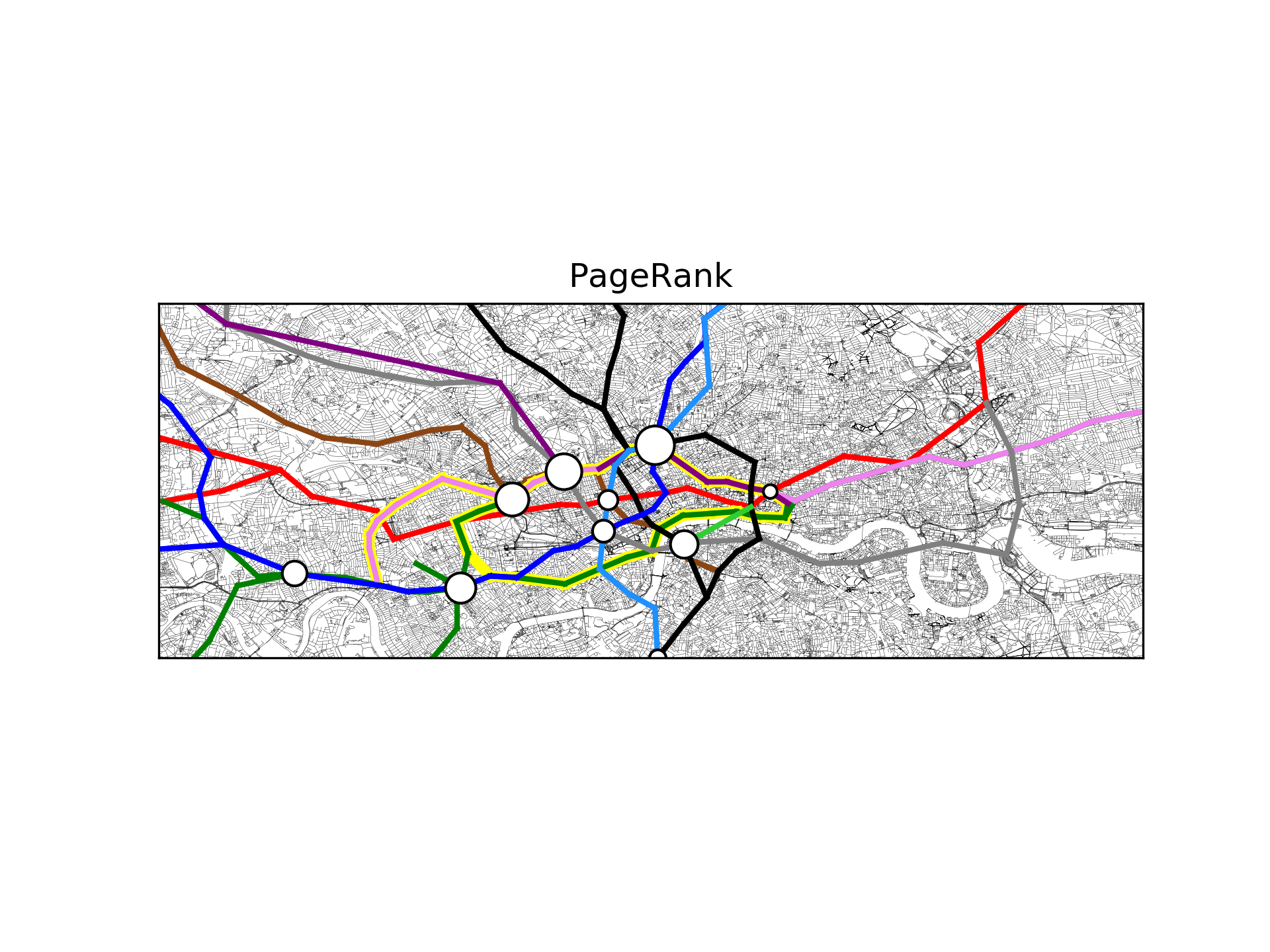}\hfill 
	\includegraphics[width=0.45\columnwidth,clip,trim=2cm 3cm 1.5cm 3cm]{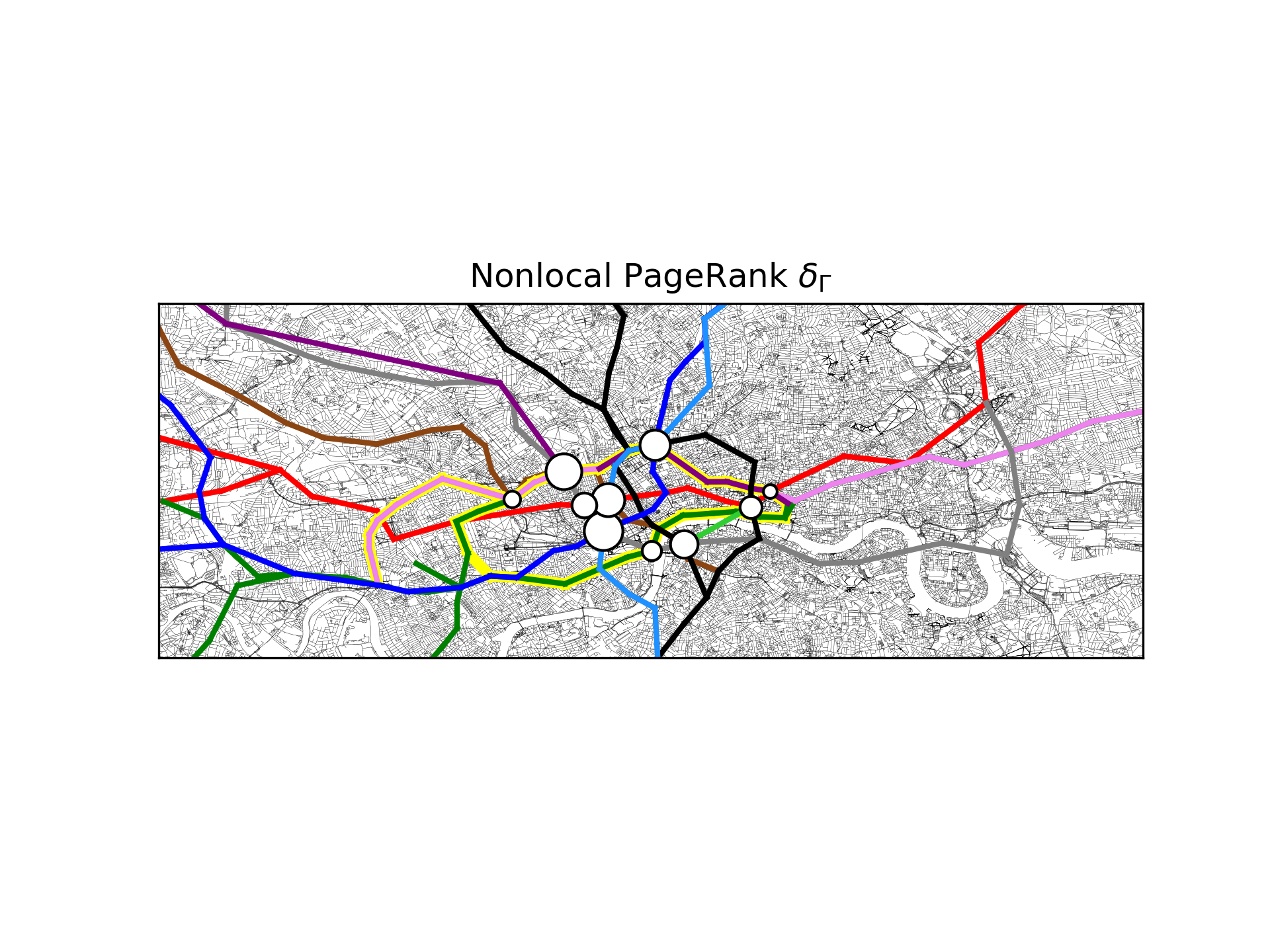}\\
	\includegraphics[width=0.45\columnwidth,clip,trim=2cm 3cm 1.5cm 3cm]{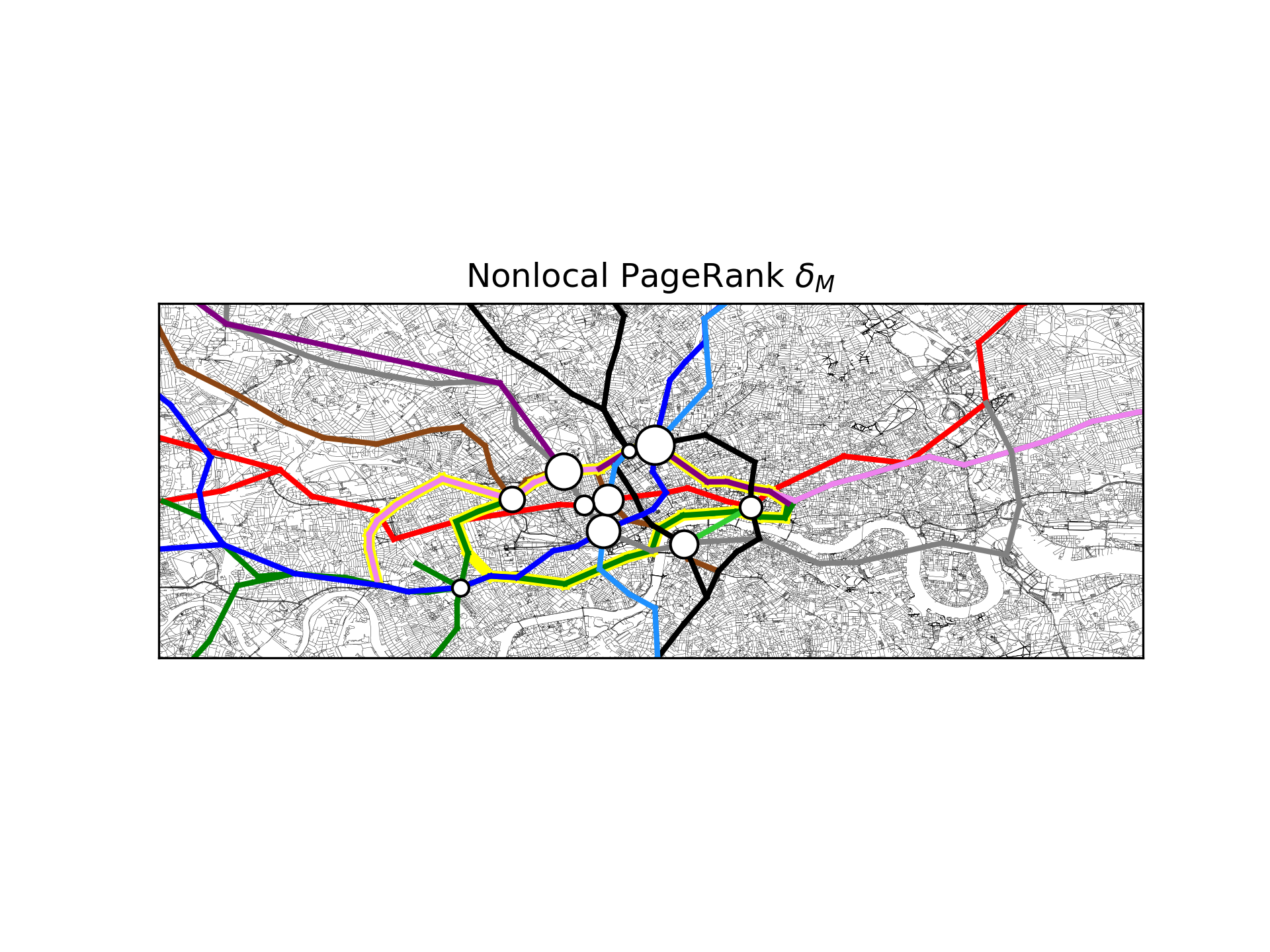}\hfill
	\includegraphics[width=0.45\columnwidth,clip,trim=2cm 3cm 1.5cm 3cm]{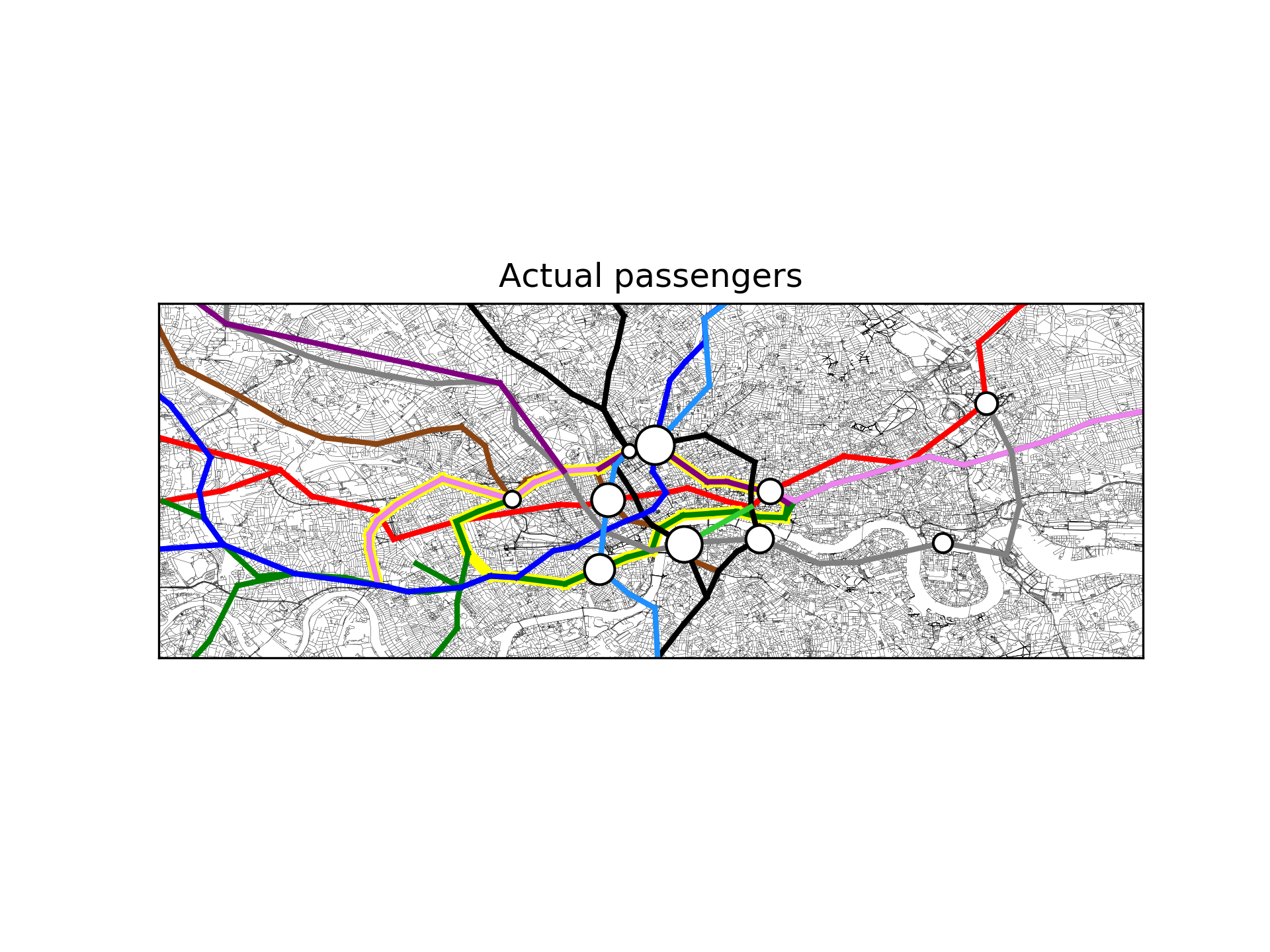}
	\caption{This figure shows the name and the geographic collocations of the top $10$ ranked stations according to the considered different ranking algorithms together with the top $10$ ranked stations with respect to the actual number of passengers.}
	\label{fig:london_groundtruth}
\end{figure}

\subsubsection*{Computation of the metro distance}
To obtain the penalized version of the shortest path distance  $\delta_M(i,j)$, see~\eqref{eq:metro_distance}, we synthesized from a \textit{multilayer} interpretation for the graph $\Gamma$. 
This is a particularly useful approach and, in this case, enabled us to encode the information coming from the connectedness of two nodes $i$ and $j$ via the line $k$. The multilayer structure of the graph can be naturally represented using a tensor $\mathcal{T} \in \mathbb{R}^{n \times n \times k}$ such that
\begin{equation}
\mathcal{T}_{ij\ell}= \begin{cases}
1 \hbox{ if } i \sim j \hbox{ on the line } \ell\\
0 \hbox{ otherwise.}
\end{cases}
\end{equation}
In the following we will use the following Matlab notation: $\mathcal{T}(:,:,\ell):=(\mathcal{T}(i,j,\ell))_{i,j \in \{1,\dots,n\}}$.
Observe that for every $\ell=1,\dots,k$, since we are considering undirected connections among nodes, we have $\mathcal{T}(:,:,\ell)=\mathcal{T}(:,:,\ell)^T$; moreover, $\operatorname{sign}(\sum_{\ell=1}^{k}T(:,:,\ell))$ returns exactly the adjacency matrix of the full graph $\Gamma$.
By mapping each node $i \in \{1,\dots,n\}$ of the graph in $i_1,\dots,i_k$, it is possible to form the block matrix 
\begin{equation}
T=\begin{bmatrix}
T_{1,1} & \dots & T_{1,k} \\
\vdots & \vdots & \vdots \\
T_{k,1} & \dots & T_{k,k} \\
\end{bmatrix} \in \mathbb{R}^{nk\times nk}
\end{equation}
such that $T_{\ell,\ell}=\mathcal{T(:,:,\ell)}$ and $(T_{\ell_1,\ell_2})_{i,i}=(T_{\ell_2,\ell_1})_{i,i}=1$ if the node $i \in \{1,\dots,n\}$ is at the intersection of the metro lines $\ell_1$ and $\ell_2$; the remaining elements of $T_{\ell_1,\ell_2}$ and $T_{\ell_2,\ell_1}$ are set to zero. 
Now, in the \textit{expanded} graph $\overline{\Gamma}=adj(T)$ with nodes $\{1_1,\dots,n_1,\dots, 1_k, \dots, n_k\}$, let us analyze the path  $j_{\ell_1}\rightarrow i_{\ell_2} \rightarrow u_{\ell_3}\rightarrow e_{\ell_4} $ with $\{j,i,u,e\} \in \{1,\dots,n\}$; it is easy to recognize that the case $i=u$ corresponds to the case where a traveler changes metro line ${\ell_2}$ into metro line ${\ell_3}$ at the node $i$. The metro distance $\delta_M(i,j)$ is then obtained considering
$\delta_M(i,j)=\min_{\ell_1=1,\dots,k,\ell_2=1,\dots,k}\delta_{\Gamma}(i_{\ell_1},j_{\ell_2})$, being $\delta_\Gamma$ the shortest path distance of the nodes $i_{\ell_1},j_{\ell_2}$ computed in~$\overline{\Gamma}$.
As in most of the other examples, the main computational cost is represented by the computation of the shortest path distance on $\overline{\Gamma}$. 

\section{Conclusions and Future Work}

In this work we have introduced a nonlocal version of the classic PageRank model. The new version encompasses a nonlocal navigation strategy of the underlying network, permitting the usage of any suitable graph distance. Generalizing the L\'evy and exponential transition models, we have introduced a general definition for a class of functions which can be used to modulate the range of the interactions.

With the approach presented here it is possible to mitigate several typical phenomena occurring in eigenvector centralities, such as the phenomenon of localization of the measure, and the issue of the assignation of numerically indistinguishable values to nodes that are in the lower part of the ranking. The mitigation of these behaviors increases the predictive power of the nonlocal PageRank when compared with the local PageRank, as it has been confirmed by the real-world applications presented in this work.

Even if the model we present encodes a completely different dynamics for the network interactions, if compared with the local approach, it is still possible to use a standard series of numerical tools for the efficient computations of PageRank vectors, see for example \cite{cipolla2017euler,brezinski2006pagerank,del2005fast,golub2006arnoldi}. The main differences are represented by the setup phase of the algorithm, i.e., by the need of computing the distance matrix for the underlying network, and by the fact that nonlocal transition matrices are in general not sparse. On the other hand, suitable choices of the smoothing functions $f_\alpha$ may lead to structured transition matrices (as in the case of fractional derivatives \cite{massei2019fast} e.g.) and exploring this line of research may lead to efficient methods for using nonlocal PageRank on large scale problems, an issue that will be object of future investigations.%
\bibliographystyle{abbrv}
\bibliography{bibliography}

\begin{thebibliography}{10}

\bibitem{MR3014051}
M.~Al~Hasan and M.~J. Zaki.
\newblock A survey of link prediction in social networks.
\newblock In {\em Social network data analytics}, pages 243--275. Springer, New
  York, 2011.

\bibitem{arrigo2017non}
F.~Arrigo, P.~Grindrod, D.~J. Higham, and V.~Noferini.
\newblock Non-backtracking walk centrality for directed networks.
\newblock {\em Journal of Complex Networks}, 6(1):54--78, 2017.

\bibitem{arrigo2019framework}
F.~Arrigo, D.~J. Higham, and F.~Tudisco.
\newblock A framework for second order eigenvector centralities and clustering
  coefficients.
\newblock {\em Proceedings Royal Society A}, 476:20190724, 2020.

\bibitem{benson2019three}
A.~R. Benson.
\newblock Three hypergraph eigenvector centralities.
\newblock {\em SIAM Journal on Mathematics of Data Science}, 1(2):293--312,
  2019.

\bibitem{benson2016higher}
A.~R. Benson, D.~F. Gleich, and J.~Leskovec.
\newblock Higher-order organization of complex networks.
\newblock {\em Science}, 353(6295):163--166, 2016.

\bibitem{benson2017spacey}
A.~R. Benson, D.~F. Gleich, and L.-H. Lim.
\newblock The spacey random walk: A stochastic process for higher-order data.
\newblock {\em SIAM Review}, 59(2):321--345, 2017.

\bibitem{benzi2013total}
M.~Benzi and C.~Klymko.
\newblock Total communicability as a centrality measure.
\newblock {\em Journal of Complex Networks}, 1(2):124--149, 2013.

\bibitem{MR1298430}
A.~Berman and R.~J. Plemmons.
\newblock {\em Nonnegative matrices in the mathematical sciences}, volume~9 of
  {\em Classics in Applied Mathematics}.
\newblock Society for Industrial and Applied Mathematics (SIAM), Philadelphia,
  PA, 1994.
\newblock Revised reprint of the 1979 original.

\bibitem{brezinski2006pagerank}
C.~Brezinski and M.~Redivo-Zaglia.
\newblock The {PageRank} vector: properties, computation, approximation, and
  acceleration.
\newblock {\em SIAM J. Matrix Anal. Appl.}, 28(2):551--575, 2006.

\bibitem{Cardillo2013}
A.~Cardillo, J.~G{\'o}mez-Garde{\~{n}}es, M.~Zanin, M.~Romance, D.~Papo, F.~d.
  Pozo, and S.~Boccaletti.
\newblock {Emergence of network features from multiplexity}.
\newblock {\em Scientific Reports}, 3(1):1344, Feb 2013.

\bibitem{MR3771541}
T.-H.~H. Chan, A.~Louis, Z.~G. Tang, and C.~Zhang.
\newblock Spectral properties of hypergraph {L}aplacian and approximation
  algorithms.
\newblock {\em J. ACM}, 65(3):Art. 15, 48, 2018.

\bibitem{MR2755905}
P.~Chebotarev.
\newblock A class of graph-geodetic distances generalizing the shortest-path
  and the resistance distances.
\newblock {\em Discrete Appl. Math.}, 159(5):295--302, 2011.

\bibitem{MR2822195}
P.~Chebotarev.
\newblock The graph bottleneck identity.
\newblock {\em Adv. in Appl. Math.}, 47(3):403--413, 2011.

\bibitem{ChoandMeyer}
G.~E. Cho and C.~D. Meyer.
\newblock Comparison of perturbation bounds for the stationary distribution of
  a {M}arkov chain.
\newblock {\em Linear Algebra Appl.}, 335:137--150, 2001.

\bibitem{cipolla2017euler}
S.~Cipolla, C.~Di~Fiore, and F.~Tudisco.
\newblock Euler-{R}ichardson method preconditioned by weakly stochastic matrix
  algebras: a potential contribution to {Pagerank} computation.
\newblock {\em Electron. J. Linear Algebra}, 32, 2017.

\bibitem{cipolla2019extrapolation}
S.~Cipolla, M.~Redivo-Zaglia, and F.~Tudisco.
\newblock {Extrapolation Methods for fixed-point Multilinear PageRank
  computations}.
\newblock {\em Numer. Linear. Algebra Appl.}, page e2280, 2020.

\bibitem{cipolla2019shifted}
S.~Cipolla, M.~Redivo-Zaglia, and F.~Tudisco.
\newblock Shifted and extrapolated power methods for tensor
  $\ell^{p}$-eigenpairs.
\newblock {\em Electron. Trans. Numer. Anal.}, Accepted for publication, 2020.

\bibitem{cohen2010complex}
R.~Cohen and S.~Havlin.
\newblock {\em Complex networks: structure, robustness and function}.
\newblock Cambridge university press, 2010.

\bibitem{MR2865011}
T.~A. Davis and Y.~Hu.
\newblock The {U}niversity of {F}lorida sparse matrix collection.
\newblock {\em ACM Trans. Math. Software}, 38(1):Art. 1, 25, 2011.

\bibitem{de2014navigability}
M.~De~Domenico, A.~Sol{\'e}-Ribalta, S.~G{\'o}mez, and A.~Arenas.
\newblock Navigability of interconnected networks under random failures.
\newblock {\em Proceedings of the National Academy of Sciences},
  111(23):8351--8356, 2014.

\bibitem{del2005fast}
G.~M. Del~Corso, A.~Gulli, and F.~Romani.
\newblock Fast {PageRank} computation via a sparse linear system.
\newblock {\em Internet Mathematics}, 2(3):251--273, 2005.

\bibitem{derrible2012network}
S.~Derrible.
\newblock Network centrality of metro systems.
\newblock {\em PloS one}, 7(7):e40575, 2012.

\bibitem{MR3363405}
I.~S. Duff, R.~G. Grimes, and J.~G. Lewis.
\newblock Sparse matrix test problems.
\newblock {\em ACM Trans. Math. Software}, 15(1):1--14, 1989.

\bibitem{estrada2012path}
E.~Estrada.
\newblock {Path Laplacian matrices: introduction and application to the
  analysis of consensus in networks}.
\newblock {\em Linear algebra and its applications}, 436(9):3373--3391, 2012.

\bibitem{estrada2012structure}
E.~Estrada.
\newblock {\em The structure of complex networks: theory and applications}.
\newblock Oxford University Press, 2012.

\bibitem{estrada2017random}
E.~Estrada, J.-C. Delvenne, N.~Hatano, J.~L. Mateos, R.~Metzler, A.~P. Riascos,
  and M.~T. Schaub.
\newblock Random multi-hopper model: super-fast random walks on graphs.
\newblock {\em Journal of Complex Networks}, 6(3):382--403, 2017.

\bibitem{estrada2017path}
E.~Estrada, E.~Hameed, N.~Hatano, and M.~Langer.
\newblock {Path Laplacian operators and superdiffusive processes on graphs. I.
  One-dimensional case}.
\newblock {\em Linear Algebra and its Applications}, 523:307--334, 2017.

\bibitem{estrada2018path}
E.~Estrada, E.~Hameed, M.~Langer, and A.~Puchalska.
\newblock {Path Laplacian operators and superdiffusive processes on graphs. II.
  Two-dimensional lattice}.
\newblock {\em Linear Algebra and its Applications}, 555:373--397, 2018.

\bibitem{fagin2003comparing}
R.~Fagin, R.~Kumar, and D.~Sivakumar.
\newblock Comparing top k lists.
\newblock {\em SIAM J. Discrete Math.}, 17(1):134--160, 2003.

\bibitem{fasino2017modularity}
D.~Fasino and F.~Tudisco.
\newblock A modularity based spectral method for simultaneous community and
  anti-community detection.
\newblock {\em Linear Algebra and its Applications}, 542:605--623, 2018.

\bibitem{fasino2019higher}
D.~Fasino and F.~Tudisco.
\newblock Ergodicity coefficients for higher-order stochastic processes.
\newblock {\em SIAM J. Mathematics of Data Science}, (to appear).

\bibitem{Floyd:1962:A9S:367766.368168}
R.~W. Floyd.
\newblock Algorithm 97: {S}hortest {P}ath.
\newblock {\em Commun. ACM}, 5(6):345--, June 1962.

\bibitem{girvan2002community}
M.~Girvan and M.~E. Newman.
\newblock Community structure in social and biological networks.
\newblock {\em Proceedings of the national academy of sciences},
  99(12):7821--7826, 2002.

\bibitem{golub2006arnoldi}
G.~H. Golub and C.~Greif.
\newblock An {A}rnoldi-type algorithm for computing {PageRank}.
\newblock {\em BIT Numerical Mathematics}, 46(4):759--771, 2006.

\bibitem{MR2302549}
D.~J. Higham.
\newblock A matrix perturbation view of the small world phenomenon.
\newblock {\em SIAM Rev.}, 49(1):91--108, 2007.

\bibitem{5470485}
M.~{Holtgrewe}, P.~{Sanders}, and C.~{Schulz}.
\newblock {Engineering a scalable high quality graph partitioner}.
\newblock In {\em 2010 IEEE International Symposium on Parallel Distributed
  Processing (IPDPS)}, pages 1--12, 2010.

\bibitem{Johnson:1977:EAS:321992.321993}
D.~B. Johnson.
\newblock Efficient {A}lgorithms for {S}hortest {P}aths in {S}parse {N}etworks.
\newblock {\em J. ACM}, 24(1):1--13, Jan. 1977.

\bibitem{Kirkland}
S.~Kirkland.
\newblock On a question concerning condition numbers for {M}arkov chains.
\newblock {\em SIAM J. Matrix Anal. Appl.}, 23(4):1109--1119, 2002.

\bibitem{PhysRevLett.104.018701}
G.~Li, S.~D.~S. Reis, A.~A. Moreira, S.~Havlin, H.~E. Stanley, and J.~S.
  Andrade.
\newblock Towards design principles for optimal transport networks.
\newblock {\em Phys. Rev. Lett.}, 104:018701, 1 2010.

\bibitem{doi:10.1002/asi.20591}
D.~Liben-Nowell and J.~Kleinberg.
\newblock The link-prediction problem for social networks.
\newblock {\em Journal of the American Society for Information Science and
  Technology}, 58(7):1019--1031, 2007.

\bibitem{MR2002174}
X.~Liu, G.~Strang, and S.~Ott.
\newblock Localized eigenvectors from widely spaced matrix modifications.
\newblock {\em SIAM J. Discrete Math.}, 16(3):479--498, 2003.

\bibitem{PhysRevE.90.052808}
T.~Martin, X.~Zhang, and M.~E.~J. Newman.
\newblock Localization and centrality in networks.
\newblock {\em Phys. Rev. E}, 90:052808, Nov 2014.

\bibitem{massei2019fast}
S.~Massei, M.~Mazza, and L.~Robol.
\newblock Fast solvers for two-dimensional fractional diffusion equations using
  rank structured matrices.
\newblock {\em SIAM Journal on Scientific Computing}, 41(4):A2627--A2656, 2019.

\bibitem{neumann-and-xu}
M.~Neumann and J.~Xu.
\newblock Improved bounds for a condition number for {M}arkov chains.
\newblock {\em Linear Algebra Appl.}, 386:225--241, 2004.

\bibitem{MR2282139}
M.~E.~J. Newman.
\newblock Finding community structure in networks using the eigenvectors of
  matrices.
\newblock {\em Phys. Rev. E (3)}, 74(3):036104, 19, 2006.

\bibitem{london_usage}
{Office of Railand Road}.
\newblock Estimate of {L}ondon's tube station usage.
\newblock
  \url{https://dataportal.orr.gov.uk/statistics/usage/estimates-of-station-usage/}.

\bibitem{page1999pagerank}
L.~Page, S.~Brin, R.~Motwani, and T.~Winograd.
\newblock The {PageRank} citation ranking: {B}ringing order to the web.
\newblock Technical report, Stanford InfoLab, 1999.

\bibitem{MR3706916}
M.~Paton, K.~Akartunali, and D.~J. Higham.
\newblock Centrality analysis for modified lattices.
\newblock {\em SIAM J. Matrix Anal. Appl.}, 38(3):1055--1073, 2017.

\bibitem{pozza2017stability}
S.~Pozza and F.~Tudisco.
\newblock On the stability of network indices defined by means of matrix
  functions.
\newblock {\em SIAM J. Matrix Analysis and Applications}, 39(4):1521--1546,
  2018.

\bibitem{riascos2012long}
A.~P. Riascos and J.~L. Mateos.
\newblock Long-range navigation on complex networks using {L}{\'e}vy random
  walks.
\newblock {\em Phys. Rev. E}, 86(5):056110, 2012.

\bibitem{PhysRevE.74.017101}
M.~R. Roberson and D.~ben Avraham.
\newblock Kleinberg navigation in fractal small-world networks.
\newblock {\em Phys. Rev. E}, 74:017101, 7 2006.

\bibitem{rombach2014core}
M.~P. Rombach, M.~A. Porter, J.~H. Fowler, and P.~J. Mucha.
\newblock Core-periphery structure in networks.
\newblock {\em SIAM Journal on Applied mathematics}, 74(1):167--190, 2014.

\bibitem{nr}
R.~A. Rossi and N.~K. Ahmed.
\newblock {The Network Data Repository with Interactive Graph Analytics and
  Visualization}.
\newblock In {\em AAAI}, 2015.

\bibitem{MR932541}
E.~Seneta.
\newblock Perturbation of the stationary distribution measured by ergodicity
  coefficients.
\newblock {\em Adv. in Appl. Probab.}, 20(1):228--230, 1988.

\bibitem{PhysRevE.99.012315}
K.~J. Sharkey.
\newblock Localization of eigenvector centrality in networks with a cut vertex.
\newblock {\em Phys. Rev. E}, 99:012315, Jan 2019.

\bibitem{transportationnetwork}
B.~Stabler, H.~Bar-Gera, and E.~Sall.
\newblock {Transportation Networks for Research}.
\newblock \url{https://github.com/bstabler/TransportationNetworks}.

\bibitem{To2015}
W.~M. To.
\newblock Centrality of an urban rail system.
\newblock {\em Urban Rail Transit}, 1(4):249--256, Dec 2015.

\bibitem{tudisco2015note}
F.~Tudisco.
\newblock A note on certain ergodicity coeflcients.
\newblock {\em Special Matrices}, 3(1), 2015.

\bibitem{tudisco2019nonlinear}
F.~Tudisco and D.~J. Higham.
\newblock A nonlinear spectral method for core--periphery detection in
  networks.
\newblock {\em SIAM Journal on Mathematics of Data Science}, 1(2):269--292,
  2019.

\bibitem{VANHEUKELUM2002313}
A.~[van Heukelum], G.~Barkema, and R.~Bisseling.
\newblock {DNA Electrophoresis Studied with the Cage Model}.
\newblock {\em Journal of Computational Physics}, 180(1):313 -- 326, 2002.

\bibitem{Warshall:1962:TBM:321105.321107}
S.~Warshall.
\newblock A {T}heorem on {B}oolean {M}atrices.
\newblock {\em J. ACM}, 9(1):11--12, Jan. 1962.

\bibitem{Weng2015}
T.~Weng, M.~Small, J.~Zhang, and P.~Hui.
\newblock L{\'e}vy walk navigation in complex networks: A distinct relation
  between optimal transport exponent and network dimension.
\newblock {\em Scientific Reports}, 5:1--9, 11 2015.

\end{thebibliography}
\end{document}